\documentclass{article}
\usepackage{graphicx}
\usepackage{amsmath}
\usepackage{amssymb}
\usepackage{amsthm}
\usepackage[colorlinks, citecolor={blue!0!black}]{hyperref}
\usepackage{pifont}
\usepackage[authoryear]{natbib}
\usepackage{xcolor}
\usepackage{amsmath, booktabs, graphicx, subcaption, array, tabularx, blkarray}
\usepackage[capitalize]{cleveref}
\usepackage{geometry}
\usepackage{algorithm}
\usepackage{algorithmic}
\usepackage{placeins}
\usepackage{svg}
\usepackage{tikz}
\usetikzlibrary{matrix}
\usetikzlibrary{calc}
\usetikzlibrary{positioning}

\newtheorem{theorem}{Theorem}[section]

\newtheorem{definition}[theorem]{Definition}

\newcommand{\cmark}{\ding{51}}%
\newcommand{\xmark}{\ding{55}}%

\begin{document}

\title{SVD Incidence Centrality: A Unified Spectral Framework for Node and Edge Analysis in Directed Networks and Hypergraphs}

\author{Jorge Luiz Franco$^{1,2}$ \and Thomas Peron$^{2}$ \and Alcebiades Dal Col$^{3}$ \and Fabiano Petronetto$^{3}$  \and Filipe Alves Neto Verri$^{1}$ \and Eric K. Tokuda$^{2}$ \and Luiz Gustavo Nonato$^{2}$}

\date{}

\maketitle

\begin{center}
\small
$^1$ Instituto Curvelo\\
$^2$ University of São Paulo\\
$^3$Federal University of Espirito Santo\\
\end{center}

\begin{abstract}
Identifying influential nodes and edges in directed networks remains a fundamental challenge across domains from social influence to biological regulation. Most existing centrality measures face a critical limitation: they either discard directional information through symmetrization or produce sparse, implementation-dependent rankings that obscure structural importance. We introduce a unified spectral framework for centrality analysis in directed networks grounded in the Singular value decomposition of the incidence matrix. The proposed approach derives both vertex and edge centralities via the pseudoinverse of Hodge Laplacians, yielding dense and well-resolved rankings that overcome the sparsity limitations commonly observed in betweenness centrality for directed graphs. Unlike traditional measures that require graph symmetrization, our framework naturally preserves directional information, enabling principled hub/authority analysis while maintaining mathematical consistency through spectral graph theory. The method extends naturally to hypergraphs through the same mathematical foundation. Experimental validation on real-world networks demonstrates the framework's effectiveness across diverse domains where traditional centrality measures encounter limitations due to implementation dependencies and sparse outputs.
\end{abstract}

\section{Introduction}

A fundamental challenge in network science is developing centrality measures that effectively handle both edge importance and directional relationships in complex networks. While traditional approaches excel for undirected graphs, they face limitations with directional information preservation. Current-flow centrality, eigenvector centrality, communicability centrality, and related measures require graph symmetrization for directed networks, potentially losing the asymmetric relationships that define social influence, biological regulation, and information flow systems \citep{kleinberg1999authoritative, freeman1978centrality}. Recent work on spectral centralities for directed and higher-order networks further highlights that symmetrization can obscure essential structural information encoded in orientation and flow \citep{contreras2024beyond}.

This limitation reveals a theoretical gap. Existing centrality frameworks treat vertex and edge importance as separate problems, requiring distinct computational approaches and often yielding inconsistent results. The lack of mathematical unity between vertex and edge measures can hinder coherent structural interpretation in systems where directionality, hierarchy, and flow are fundamental, such as regulatory, communication, and infrastructure networks \citep{bonacich1987power, newman2004finding}. Moreover, edge-centric measures have proved crucial in predictive tasks \citep{franco2026holding}.

We propose a unified spectral framework based on Hodge theory. We ground network centrality in the Singular Value Decomposition (SVD) of the incidence matrix, where directional information is naturally preserved through oriented boundary operators. This approach simultaneously derives vertex and edge centralities from the same mathematical foundation, ensuring consistency while maintaining theoretical rigor through connections to electrical network theory and algebraic topology. Related Hodge-theoretic formulations have recently been shown to provide interpretable decompositions of network flows into gradient, cyclic, and harmonic components, emphasizing the importance of orientation-preserving operators for network analysis \citep{schaub2020random}.

For undirected networks, the proposed SVD vertex centrality rankings equal current-flow closeness rankings—validating the electrical network foundations. However, this equivalence breaks down for directed networks, precisely where traditional measures face challenges. SVD centrality preserves directional information through spectral decomposition, enabling natural hub/authority analysis without graph symmetrization and providing a principled alternative to existing directed spectral constructions \citep{fanuel2019deformed}. We also provide a natural and direct extension of the framework to hypergraphs.

Our contributions span three key dimensions. Theoretically, we establish rigorous connections between centrality measures and physical observables through electrical resistance analogies and energy landscape interpretations derived from Hodge theory. Methodologically, we develop a unified spectral framework that generates mathematically consistent vertex and edge centralities from a single incidence matrix decomposition. Empirically, we validate the framework across diverse directed network domains, including social, biological, and infrastructure systems, demonstrating how orientation-preserving spectral centrality reveals structural information hidden by symmetrized approaches. 

This work explores spectral centrality as an approach for directed networks and hypergraphs analysis, connecting mathematical foundations with practical network science applications.

\section{Related Work}

Centrality measures have long been a key area of research in network science, quantifying the importance of vertices, edges, and hyperedges. Traditional vertex-based approaches like \emph{degree centrality} \citep{freeman1978centrality} evaluate the importance of a vertex based on its number of connections, a simple yet effective measure. However, such local approaches are insufficient to capture the broader influence of a vertex in the network. More sophisticated methods, such as \emph{eigenvector centrality} \citep{bonacich1987power} and \emph{PageRank} \citep{brin1998pagerank}, leverage the network's global structure. In particular, eigenvector centrality assigns higher scores to vertices connected to other influential vertices, capturing a more nuanced picture of importance. Similarly, \emph{PageRank} modifies this idea by incorporating directed edges, making it particularly effective for ranking web pages. However, PageRank primarily focuses on vertex centrality and does not directly address edge importance or hypergraph structures \citep{kucharczuk2022pagerank}.

Edge centrality measures like \emph{edge betweenness} \citep{newman2004finding, Lu2013edgebetw} emphasize the role of edges that lie on many shortest paths between vertices, highlighting connections in the network and also have hypergraph extensions \citep{hyperbetw}. However, betweenness centrality has limitations—it is based on shortest paths and is unable to capture global, spectral properties of the network, which can be important in more complex structures such as hypergraphs. Additionally, because betweenness centrality is based on shortest-path assumptions, its interpretability can be compromised when the real network process violates these assumptions \citep{bockholt2021systematic}

Recent extensions have addressed the complexities of hypergraphs, where edges can connect multiple vertices at once. For example, \citet{hyperbetw} extends betweenness centrality to hypergraphs, yet it still suffers from the same path-based limitations. On the other hand, \citet{eigenHypergraph} applies spectral methods to hypergraphs, leveraging eigenvector centrality to capture higher-order interactions. However, this method focuses solely on vertex centrality, without directly measuring edge importance.

A more advanced approach is presented by \citet{tudisco2021hyperedge}, which introduces a \emph{nonlinear eigenvector centrality} for both vertices and hyperedges. This method uses a spectral perspective to assess the centrality of vertices and edges simultaneously, providing a more global view of network structure. Nonetheless, the approach does not explicitly account for directed edges, limiting its application in certain scenarios.

In directed graphs, \citet{brohl2022straightforward} propose a neural network-based centrality measure (NN edge), focusing exclusively on edge centrality. While effective in identifying key edges, this approach lacks the ability to incorporate spectral properties, potentially missing the broader significance of edges in the network's global structure.

Spectral methods have also been extended to hypergraphs. \citet{eigenHypergraph} creates three eigenvector centralities for hypergraphs using tensors as representation. \citet{contreras2024beyond} introduces the \emph{HEC} and \emph{ZEC} methods, which use the Perron-Frobenius theorem and tensor-based algebra to compute centralities in hypergraphs. However, the tensor-based algebra adds complexity to the computation, making these methods less accessible and harder to interpret for large-scale networks.

\begin{table}
\centering
\caption{Taxonomy of popular centrality measures in directed graphs and hypergraphs.}
\label{tab:centrality_taxonomy}
\small
\begin{tabular*}{\textwidth}{@{\extracolsep{\fill}}lcccccc@{}}
\toprule
\textbf{Centrality Measure} & \textbf{Vertex} & \textbf{Edge} & \textbf{Hypergraph} & \textbf{Direction} & \textbf{Spectral} \\
\midrule
\parbox[t]{0.2\linewidth}{PageRank \\ \citep{brin1998pagerank}}         & \cmark & \xmark & \xmark & \cmark & \cmark \\
\midrule
\parbox[t]{0.2\linewidth}{Betweenness \\ \citep{hyperbetw}} & \cmark & \cmark & \cmark & \cmark & \xmark \\
\midrule
\parbox[t]{0.2\linewidth}{EigenHypergraph\\
\citep{eigenHypergraph}} & \cmark & \xmark & \cmark & \cmark & \cmark \\
\midrule
\parbox[t]{0.2\linewidth}{GRC \\ \citep{dal2023graph}} & \cmark & \xmark & \cmark & \xmark & \cmark \\
\midrule
\parbox[t]{0.2\linewidth}{NN edge \\
\citep{brohl2022straightforward}} & \xmark & \cmark & \xmark & \xmark & \xmark \\
\midrule
\parbox[t]{0.2\linewidth}{Nonlinear eigen \\
\citep{tudisco2021hyperedge}} & \cmark & \cmark & \cmark & \xmark & \cmark \\
\midrule
\parbox[t]{0.2\linewidth}{HEC/ZEC \\
\citep{contreras2024beyond}} & \cmark & \xmark & \cmark & \cmark & \cmark \\
\midrule
\parbox[t]{0.2\linewidth}{Matrix \\
\citep{vasilyeva2024matrix}} & \cmark & \xmark & \cmark & \cmark & \xmark \\
\midrule
\parbox[t]{0.2\linewidth}{Simplicial DualRank \\
\citep{liu2023eigenvector}} & \cmark & \xmark & \cmark & \cmark & \xmark \\
\midrule
\parbox[t]{0.2\linewidth}{SVD Incidence (Ours)} & \cmark & \cmark & \cmark & \cmark & \cmark \\
\bottomrule
\end{tabular*}
\end{table}

\emph{Graph Regularization Centrality (GRC)} proposed by~\citet{dal2023graph} takes a different route by focusing on vertex centrality in graphs using graph regularization and Fourier transform. However, it does not account for directionality, limiting its application to undirected networks.

\citet{vasilyeva2024matrix} presents a matrix-based approach to centrality in hypergraphs, providing a computationally efficient method that applies to both vertices and hyperedges. However, this method lacks a spectral component, making it less suitable for applications where global properties of networks are important.

A recent contribution by \citet{liu2023eigenvector} introduces the \emph{Simplicial DualRank centrality}, a parameter-free eigenvector centrality measure for weighted hypergraphs. Although the method incorporates vertex (inner) and hyperedge (outer) centrality, it does not account for direction.

These developments reveal a gap in approaches that integrate both vertex and edge centrality, particularly in directed graphs and hypergraphs. This gap motivates our proposed \emph{SVD Incidence Centrality} framework. Our framework leverages the singular value decomposition (SVD) of the incidence matrix to assess both vertex and edge centrality simultaneously. It captures the global spectral properties of both vertices and edges, providing a unified framework for directed graphs and hypergraphs.

\section{Preliminaries and Notation}


In this section, we explain the construction of the Hodge Laplacian for simplicial complexes, with a focus on how the incidence matrix serves as the boundary operator in this context. We start by establishing notations and definitions used throughout this paper, summarized in Table~\ref{tab:notation}.

\begin{table}[ht]
\centering
\caption{Table of Symbols and Notation.}
\label{tab:notation}
\begin{tabular}{ll}
\toprule
\textbf{Symbol} & \textbf{Description} \\
\midrule
$G=(V,E)$ & A graph with vertex set $V$ and edge set $E$. \\
$n = |V|$ & Number of vertices. \\
$m = |E|$ & Number of edges. \\
$B \in \mathbb{R}^{n \times m}$ & The oriented incidence matrix of the graph. \\
$L_0 = BB^\top$ & The 0-th Hodge Laplacian (Graph Laplacian on vertices). \\
$L_1 = B^\top B$ & The 1st Hodge Laplacian (Graph Laplacian on edges). \\
$B = U \Sigma V^\top$ & Singular Value Decomposition (SVD) of the incidence matrix. \\
$\sigma_k$ & The $k$-th singular value. \\
$u_k, v_k$ & The $k$-th left and right singular vectors, respectively. \\
$L_0^+, L_1^+$ & Moore-Penrose pseudoinverse of $L_0$ and $L_1$. \\
$C_v(i)$ & Vertex centrality of node $i$. \\
$C_e(e)$ & Edge centrality of edge $e$. \\
$R_{ij}$ & Effective resistance between nodes $i$ and $j$. \\
\bottomrule
\end{tabular}
\end{table}

For a directed graph with $n$ vertices and $m$ edges, the incidence matrix $B \in \mathbb{R}^{n \times m}$ has entries:
\begin{equation} \label{incidence_eq}
B_{ij} =
\begin{cases}
1 & \text{if edge } j \; \text{enters vertex } i \\
-1 & \text{if edge } j \; \text{leaves vertex } i \\
0 & \text{otherwise.}
\end{cases}
\end{equation}

The incidence matrix encodes not only undirected graphs, but also directed graphs and (un)directed hypergraphs. The orientation provided by the direction in directed graphs satisfies the fundamental boundary axiom in algebraic topology, where the composition of consecutive boundary operators equals zero: $\partial_1 \circ \partial_0 = 0$. This property is preserved in our framework because when we do not consider higher-order simplices (triangles), the boundary operator from edges to vertices ($\partial_1$) naturally satisfies this axiom through the incidence matrix structure. The directional information encoded in the ±1 entries of the incidence matrix provides the necessary orientation~\footnote{Here, the orientation of a graph is a choice of direction for each edge and it does not affect the underlying graph structure.} that makes this boundary operator well-defined, establishing a rigorous connection between graph orientation and simplicial complex theory.

\cref{fig:directed_incidence} displays the corresponding incidence matrix of a directed graph.

\begin{figure}[ht]
\centering
\includegraphics[width=.5\textwidth]{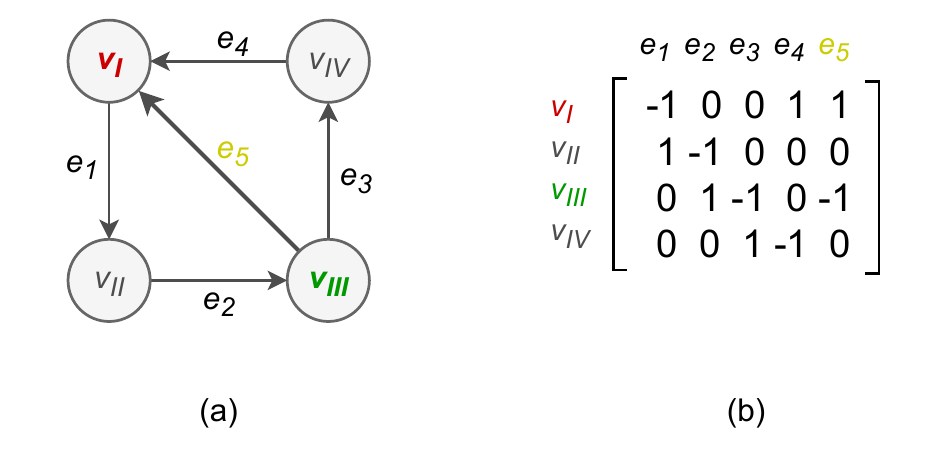}
\caption{A directed graph and its corresponding incidence matrix. For edge $e_5$ (highlighted in yellow), which leaves vertex $v_\text{III}$ and enters vertex $v_\text{I}$, the corresponding column in the matrix has a $-1$ at the $v_\text{III}$ (green) row and a $+1$ at the $v_\text{I}$ (red) row.}
\label{fig:directed_incidence}
\end{figure}

\subsection{Simplicial Complexes and Cochains}

A $k$-simplex is a set of $k+1$ vertices, denoted by $\{ v_0, v_1, \ldots, v_k \}$, which generalizes the concept of points, edges, triangles, and higher-dimensional counterparts. For instance:

\begin{itemize}
    \item A 0-simplex is a point (a vertex).
    \item A 1-simplex is an edge connecting two vertices.
    \item A 2-simplex is a triangle formed by three vertices.
    \item A 3-simplex is a tetrahedron formed by four vertices.
\end{itemize}

A face of a $k$-simplex is any subset of its vertices that forms a lower-dimensional simplex. For example, the edges of a triangle are 1-faces (1-simplices), and its vertices are 0-faces (0-simplices). More generally, an $m$-face of a $k$-simplex is any $(m+1)$-subset of its vertex set, where $0 \leq m \leq k-1$.

A simplicial complex $K$ is a collection of simplices such that:
\begin{enumerate}
    \item Every face of a simplex in $K$ is also in $K$ (closure property).
    \item The intersection of any two simplices in $K$ is either empty or a common face of both.
\end{enumerate}

Thus, each simplex is an element of the simplicial complex, and the complex is constructed by ``gluing'' simplices together along their faces.

A $k$-chain is a formal sum of $k$-simplices with coefficients in a field (usually $\mathbb{Z}$ or $\mathbb{R}$). The space of $k$-chains is denoted by $C_k(K)$.

A $k$-cochain is a linear functional on $k$-chains, and the space of $k$-cochains is denoted by $C^k(K)$. These cochains represent discrete differential forms over the simplicial complex. A cochain can be seen as assigning a value to each $k$-simplex, much like how differential forms assign values to elements of a manifold.

\subsection{Boundary Operators and the Hodge Laplacian}

The fundamental connection between incidence matrices and algebraic topology emerges through the \emph{boundary operator}, denoted by $\partial_k: C_k(K) \to C_{k-1}(K)$, which maps $k$-simplices to $(k-1)$-simplices by summing over their boundaries. The dual \emph{coboundary operator} $\delta^k: C^k(K) \to C^{k+1}(K)$ acts on cochains and establishes the algebraic foundation for our centrality framework. These operators satisfy the fundamental boundary axiom $\partial_{k-1} \circ \partial_k = 0$, which ensures topological consistency and enables the construction of homology groups.

The boundary operator can be represented by the incidence matrix $B_k$ for $k$-simplices, establishing a direct correspondence between linear algebra and topology. For example, the incidence matrix $B_1$ between vertices and edges ($1$-simplices) encodes which edges are incident to which vertices, while $B_2$ encodes relationships between edges and triangles ($2$-simplices). More generally, for any $k$, the incidence matrix $B_k$ describes the boundary relationships between $(k-1)$-simplices and $k$-simplices, making it the discrete analog of the exterior derivative in differential geometry.

The Hodge Laplacian for simplicial complexes generalizes the graph Laplacian to higher-order structures and provides the spectral foundation for our centrality measures. For a given $k$, the Hodge Laplacian is defined as $\Delta_k = \delta_{k-1} \delta_{k-1}^* + \delta_k \delta_k^*$ \citep{lim2019hodgelaplaciansgraphs}, where $\delta_k^*$ denotes the adjoint of $\delta_k$ with respect to the standard Euclidean inner product on cochain spaces (hence, in the incidence-matrix representation, $\delta_k^*$ corresponds to matrix transpose). In terms of incidence matrices, this becomes $L_k = B_k^\top B_k + B_{k+1} B_{k+1}^\top$, where $B_k$ is the incidence matrix associated with the $k$-simplices, and $B_{k+1}$ is the incidence matrix associated with the $(k+1)$-simplices. The term $B_k^\top B_k$ represents the combinatorial analog of the divergence of the gradient (measuring local variation), while $B_{k+1} B_{k+1}^\top$ corresponds to the combinatorial curl (capturing circulation or rotational aspects).

\subsection{Incidence Matrices as Boundary Operators and Differential Structure}

The incidence matrix $B_k$ fundamentally encodes the boundary relationships between $k$-simplices and $(k-1)$-simplices, acting as the discrete boundary operator $\partial_k$. This matrix maps a $k$-simplex to its set of $(k-1)$-simplices (its boundary), providing the essential topological information needed for homological analysis. In graph theory, where we consider 1-dimensional simplicial complexes, the incidence matrix of edges and vertices precisely captures which vertices form the endpoints of each edge—exactly the information required for the boundary of an edge.

The directional encoding in our incidence matrix definition provides the necessary orientation that ensures the boundary axiom $\partial_1 \circ \partial_0 = 0$ holds rigorously. For any path of two consecutive edges sharing a vertex, the +1 and -1 contributions cancel out when applying consecutive boundary operations, which is precisely what the boundary axiom requires. This cancellation property is fundamental to establishing the connection between graph orientation and simplicial complex theory, ensuring that our framework respects topological constraints.

To understand the relationship between the Hodge Laplacian and familiar differential operators from vector calculus, we interpret the Laplacian terms through the lens of gradient, curl, and divergence operations. For 0-cochains (functions on vertices), the gradient is given by the incidence matrix $B_1$, while the divergence is represented by its adjoint $B_1^\top$. For 1-cochains (functions on edges), the curl is represented by the incidence matrix $B_2$, with its adjoint $B_2^\top$ capturing the corresponding co-curl operation. This correspondence makes the Hodge Laplacian a natural generalization of these operators to higher dimensions and discrete settings.

The decomposition $L_k = B_k^\top B_k + B_{k+1} B_{k+1}^\top$ reveals deep geometric structure: the term $B_k^\top B_k$ corresponds to the divergence of the gradient, capturing local variation at each $k$-simplex, while $B_{k+1} B_{k+1}^\top$ corresponds to the curl of higher-level cochains, capturing circulation or flux through the $k$-simplices. Thus, the Hodge Laplacian encodes both local and global topological information of the simplicial complex, extending the ideas of the Laplacian on graphs to higher-order structures and providing the spectral foundation that makes our SVD-based centrality measures both theoretically grounded and computationally tractable.

\section{Methods}
\label{sec:methods}

We derive centrality measures from the Singular Value Decomposition of the incidence matrix, establishing a framework grounded in Hodge theory and spectral graph analysis. This approach provides a unified mathematical foundation that simultaneously captures vertex and edge importance while naturally preserving directional information in network relationships.
\label{sec:why_svd_works}

\begin{figure}[ht]
  \centering
  \includegraphics[width=0.95\textwidth]{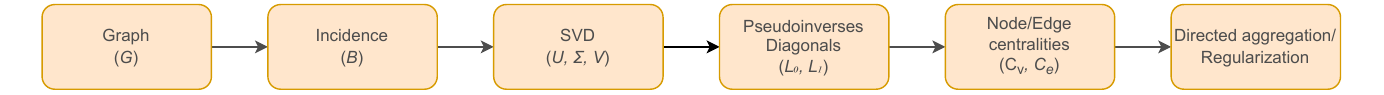}
  \caption{Our SVD-centrality calculation pipeline demonstrating the complete workflow from incidence matrix construction to centrality computation.}
  \label{fig:svd_pipeline}
\end{figure}

A matrix $B\in \mathbb{R}^{n \times m}$ can be decomposed using the \emph{Singular Value Decomposition} (SVD):
\[
B = U \Sigma V^T
\]
where $U\in \mathbb{R}^{n \times n}$ and $V \in \mathbb{R}^{m \times m}$ are orthogonal matrices, and $\Sigma\in \mathbb{R}^{n \times m}$ is a rectangular diagonal matrix containing the singular values (non-negative real values $\sigma$ with $Bv=\sigma u$, which are associated with left singular vectors $u\in \mathbb{R}^{n}$ and right singular vectors $v\in \mathbb{R}^{m}$). The singular value decomposition can be seen as a generalization of the eigendecomposition for a square symmetric matrix.

The SVD has been used to study graphs in different contexts, including community detection~\cite{sarkar2011community}, centrality~\cite{benzi2013ranking} and matrix functions~\cite{baglama2014analysis}. Most of these works, though, consider the SVD \emph{of the adjacency matrix}. Our contribution lies in applying the SVD specifically to the incidence matrix, which provides a perspective that simultaneously captures vertex-edge relationships and enables unified centrality analysis.

\Cref{fig:svd_pipeline} illustrates our complete computational workflow from graph representation to centrality calculation, demonstrating how the incidence matrix SVD enables simultaneous vertex and edge analysis through the pseudoinverse construction.

The Singular Value Decomposition (SVD) of the incidence matrix $B$ can be derived from the eigenvalue decompositions of $B^\top B$ and $B B^\top$, where $B \in \mathbb{R}^{n \times m}$, which arise in the context of the Hodge Laplacian. Specifically, $B^\top B$ relates to combinatorial divergence followed by gradient, and $B B^\top$ corresponds to gradient followed by divergence. The SVD expresses these relationships through singular values and singular vectors, providing insight into the graph's structure.

The eigenvalue decomposition of $B^\top B$ is:
\begin{equation}
B^\top B = V \Lambda V^\top,
\end{equation}
where $\Lambda$ is a diagonal matrix of eigenvalues of $B^\top B$, and $V$ contains the eigenvectors. Similarly, for $B B^\top$:
\begin{equation}
B B^\top = U \Lambda U^\top,
\end{equation}
where analogously $U$ contains the eigenvectors of $B B^\top$. The singular values, in the rectangular diagonal matrix $\Sigma$, where $\Sigma \Sigma ^T = \Lambda$, are the square roots of the eigenvalues of $B^\top B$ (and $B B^\top$):
$$ \sigma_i = \sqrt{\lambda_i}, \quad i = 1, 2, \ldots, \min(n, m), $$
including $\sigma_i = 0$ when $\lambda_i = 0$. These singular values are stored on the diagonal of the matrix $\Sigma$.

The matrices $U$ and $V$ contain the left and right singular vectors of $B$. The left singular vectors (columns of $U$) correspond to eigenvectors of $B B^\top$ and provide information about vertices in the graph. The right singular vectors (columns of $V$) correspond to eigenvectors of $B^\top B$, providing information about edges.

This structure reflects how $B$ acts as a linear transformation. The columns of $U$ span the vertex space, and the columns of $V$ span the edge space. The singular vectors form orthogonal bases for these spaces, highlighting key structural and topological features of the graph.

\subsection{Relationship with Hodge Laplacian}

We derive the matrices $U$ and $V$ from the singular value decomposition (SVD) of the incidence matrix $B$ using the spectral properties of the Hodge Laplacian. The general form of the $k$-th order Hodge Laplacian is $L_k = B_k^\top B_k + B_{k+1} B_{k+1}^\top$ \citep{horak2013spectra}. In particular, the Laplacian of order zero is $B_1 B_1^\top = L_0 \cong L = B B^\top$, where $B$ is the ordinary vertex--edge incidence matrix and $L$ is the graph Laplacian acting on vertices. The first-order Hodge Laplacian (the graph Helmholtzian) acts on edges and is given by $B_1^\top B_1 + B_{2} B_{2}^\top = L_1 \cong L_{\text{Hel}} = B^\top B + B_{\text{et}} B_{\text{et}}^\top$, where $B_{\text{et}}$ denotes the edge--triangle incidence matrix. In the absence of triangles, $B_2 = 0$, so $L_{\text{Hel}} \cong L_e = B^\top B$, which is the edge Laplacian.

These matrices are precisely those used for the spectral decomposition in the SVD (Section~\ref{sec:why_svd_works}): $B^\top B = L_e = V \Lambda V^\top$ and $B B^\top = L = U \Lambda U^\top$. The spectra of the Hodge Laplacians for vertices (graph Laplacian) and edges (graph Helmholtzian) have proven useful in various applications \citep{alain2024graphclassificationgaussianprocesses, chung1997spectral}. Both mathematical and experimental evidence indicate that the singular vectors from $U$ and $V$ serve as effective encoders of local and global graph information, capturing meaningful quantities such as centrality. For instance, the leading eigenvector of $B B^\top$ has already been employed for vertex centrality \citep{xu2021signless}, providing a precedent for our approach.

Consider a finite directed graph $G=(V,E)$ with incidence matrix $\mathbf{B}\in\mathbb{R}^{n\times m}$ and SVD $\mathbf{B}=\mathbf{U}\Sigma\mathbf{V}^\top$, where $\Sigma$ contains the singular values $\sigma_1\ge\cdots\ge\sigma_r>0$ and $r=\operatorname{rank}(\mathbf{B})$. The spectral decomposition reveals that the Hodge Laplacians $\mathbf{L}_0=\mathbf{B}\mathbf{B}^\top$ and $\mathbf{L}_1=\mathbf{B}^\top\mathbf{B}$ satisfy $\mathbf{L}_0=\mathbf{U}\Sigma^2\mathbf{U}^\top$ and $\mathbf{L}_1=\mathbf{V}\Sigma^2\mathbf{V}^\top$. Thus, the nonzero eigenvalues of both $\mathbf{L}_0$ and $\mathbf{L}_1$ are exactly the squared singular values $\{\sigma_i^2\}_{i=1}^r$, with eigenvectors given by the corresponding columns of $\mathbf{U}$ and $\mathbf{V}$, respectively.

The topological meaning of the zero eigenvalues follows from the rank--nullity theorem. Since $\operatorname{rank}(\mathbf{B})=n-c$, where $c$ is the number of connected components, we have $\dim\ker(\mathbf{B}^\top)=c$ and $\dim\ker(\mathbf{B})=m-(n-c)=\beta_1$, with $\beta_1$ being the first Betti number (the cycle space dimension). These kernel dimensions correspond to the multiplicities of the zero eigenvalues in $\mathbf{L}_0$ and $\mathbf{L}_1$, establishing a direct link between spectral properties and network topology \citep{biggs1993algebraic, friedman1998computing}. 

An important invariance property ensures robustness: changing edge orientations by transforming $\mathbf{B}$ to $\mathbf{B}' = \mathbf{B}\mathbf{D}$, where $\mathbf{D}$ is a diagonal matrix with $\pm 1$ entries, leaves the singular values unchanged due to the unitary invariance of the SVD. Only the signs of the right singular vectors are affected, leaving $\Sigma$ and $\Sigma^2$ invariant, and hence the spectra of $\mathbf{L}_0$ and $\mathbf{L}_1$ are preserved.

This spectral analysis leads naturally to the discrete Hodge decomposition \citep{eckmann1945harmonische, horak2013spectra}. The SVD induces an orthogonal splitting of the vertex space $\mathbb{R}^n$ and the edge space $\mathbb{R}^m$ into exact, co-exact, and harmonic subspaces, corresponding respectively to the images and kernels of $\mathbf{B}$ and $\mathbf{B}^\top$. This decomposition forms the theoretical foundation for our centrality measures, as it elucidates how information flows and potentials distribute across the topological structure of the network.

\subsection{Hodge Decomposition and the Fundamental Subspaces}

The Hodge decomposition offers a tool for analyzing vector fields on graphs and higher-dimensional manifolds. In simplicial complexes, this decomposition allows the feature space $\mathbb{R}^{N_k}$ (for $k$-simplices) to be split into three orthogonal components: exact, co-exact, and harmonic. Mathematically, this is expressed as:
$$
\mathbb{R}^{N_k} = \text{Im}(B_{k+1}) \oplus \text{Im}(B_k^\top) \oplus \ker(L_k),
$$
where $\text{Im}(B_{k+1})$ represents the exact subspace, $\text{Im}(B_k^\top)$ the co-exact subspace, and $\ker(L_k)$ the harmonic subspace. The dimension of the harmonic subspace is given by the $k$-th Betti number, indicating the number of $k$-dimensional holes.

When we express this decomposition for the vertices and edges, $\mathbb{R}^n$ and $\mathbb{R}^m$ respectively, we have:
$$\mathbb{R}^n = \text{Im}(B) \oplus \ker (L) =  \text{Im}(B) \oplus \ker (B B^\top), $$
   $$\mathbb{R}^m = \text{Im}(B_2) \oplus \text{Im}(B^\top) \oplus \ker(L_1). $$
Since we are not considering triangles, i.e, $B_2=0$ and $L_1 = L_e=B^\top B$, we have:
$$\mathbb{R}^m = \text{Im}(B^\top) \oplus \ker(L_e)=  \text{Im}(B^\top) \oplus \ker(B^\top B). $$
As $\ker(B^\top B) = \ker(B)$ and $\ker(B B^\top) = \ker(B^T)$, then we simply have:
$$\mathbb{R}^n =   \text{Im}(B) \oplus \ker (B^\top), $$
$$\mathbb{R}^m =  \text{Im}(B^\top) \oplus \ker(B). $$

This is exactly what we obtain from the SVD decomposition through the Four Fundamental Subspaces theorem. Thus, the SVD decomposition can be as effective as the Hodge Decomposition in capturing the essential structural properties of the graph.

\paragraph{Conclusion} The SVD of the incidence matrix $B$ simultaneously diagonalizes the Hodge Laplacians $L_0$ and $L_1$ and provides orthogonal `modal' bases for vertices/vertex-space and edges/edge-space.

\subsection{Notation and modal (Fourier) intuition}

This section presents a \emph{derivation from first principles} of node and edge centralities built from the singular value decomposition (SVD) of the incidence matrix. We do \emph{not} introduce a heuristic: the measures below are canonical spectral objects with precise variational and electrical interpretations.  The exposition proceeds as follows.  First we introduce notation and modal intuition (Fourier/Rayleigh viewpoint).  Second we give compact, provable definitions of vertex and edge centrality via direct SVD formulation.  Third we prove the core identities (SVD expressions, resistance relations) and establish invariance under orientation (hence the need for directed aggregation).  Fourth we give computational and regularization recipes for practical implementation.  Wherever useful we link the SVD centrality to classical current-flow (electrical) closeness and to related network invariants.

Let $G=(V,E)$ be a (possibly directed) graph with $n=|V|$ vertices and $m=|E|$ (oriented) edges.  Let $B\in\mathbb R^{n\times m}$ denote the oriented incidence matrix (each column has a $+1$ and a $-1$ at endpoints of the corresponding oriented edge).  Write the compact SVD
$$
B = U \Sigma V^\top,
$$
with $U\in\mathbb R^{n\times r}$, $V\in\mathbb R^{m\times r}$, $\Sigma=\operatorname{diag}(\sigma_1,\dots,\sigma_r)$, $\sigma_1\ge\sigma_2\ge\cdots\ge\sigma_r\ge0$, and $r=\operatorname{rank}(B)$.  Define the Hodge (vertex and edge) Laplacians
$$
L_0 = B B^\top \in\mathbb R^{n\times n},\qquad L_1 = B^\top B \in\mathbb R^{m\times m},
$$
so that
$$
L_0 = U\Sigma^2 U^\top,\qquad L_1 = V\Sigma^2 V^\top.
$$

The SVD provides orthogonal modal bases for vertex potentials (columns of $U$) and edge flows (columns of $V$).  The identities
$$
B v_k = \sigma_k u_k,\qquad B^\top u_k = \sigma_k v_k
$$
show that $u_k$ and $v_k$ are eigenvectors of $L_0$ and $L_1$ with eigenvalue $\sigma_k^2$.  The Rayleigh viewpoint says that $\sigma_k^2$ is the energy coupling of mode $k$: if $\phi=u_k$ then the induced flow $f=B^\top\phi$ satisfies $\|f\|_2^2=\phi^\top L_0\phi=\sigma_k^2\|\phi\|_2^2$.  Small $\sigma_k$ correspond to global, slowly-varying modes; large $\sigma_k$ correspond to energetic, flow-generating modes.  This modal/Fourier intuition motivates but does not replace the rigorous pseudoinverse definitions that follow: the pseudoinverse weights $1/\sigma_k^2$ emphasize global low-frequency modes, which is precisely the desired behavior for a resistance-based global centrality.

\subsection{Definitions: SVD centrality measures}
\begin{definition}[Vertex SVD centrality]
Given the SVD $B = U \Sigma V^\top$ with singular values $\sigma_1 \geq \sigma_2 \geq \ldots \geq \sigma_r > 0$, the \emph{vertex SVD centrality} of node $i$ is
$$
C_v(i) := \sum_{k=1}^{r} \frac{u_{k,i}^2}{\sigma_k^2}.
$$
\end{definition}

\begin{definition}[Edge SVD centrality]
Similarly, the \emph{edge SVD centrality} of edge $e$ is
$$
C_e(e) := \sum_{k=1}^{r} \frac{v_{k,e}^2}{\sigma_k^2}
$$
where $v_{k,e}$ are the components of the right singular vectors from the SVD of the regularized matrix $M$.
\end{definition}

These definitions are coordinate-free, invariant under orthonormal basis changes in degenerate eigenspaces, and admit immediate electrical/variational meanings (next subsection). Our experimental implementation employs a regularized variant of these measures for enhanced numerical stability, as detailed in the computational implementation section.

\subsection{Algebraic identities and connection to current-flow and effective resistance}
The \Cref{thm:effective_resistance} collects the elementary but essential equalities and the link to effective resistance and current-flow closeness as we discuss next.

The connection between SVD-based centrality and electrical network theory emerges through the pseudoinverse relationship and effective resistance calculations. For a connected undirected graph with incidence matrix $B$ having compact SVD $B=U\Sigma V^\top$, the pseudoinverses of the Hodge Laplacians $L_0=BB^\top$ and $L_1=B^\top B$ take the explicit forms $L_0^+ = U\Sigma^{-2}U^\top$ and $L_1^+ = V\Sigma^{-2}V^\top$, where $\Sigma^{-2}$ inverts positive singular values while preserving zeros.

The diagonal elements of these pseudoinverses, which define our centrality measures, can be expressed in terms of the SVD components as
\begin{equation}
[L_0^+]_{ii}=\sum_{\sigma_k>0} \frac{u_{k,i}^2}{\sigma_k^2} \quad \text{and} \quad [L_1^+]_{ee}=\sum_{\sigma_k>0} \frac{v_{k,e}^2}{\sigma_k^2}.
\label{eq:centrality_definitions}
\end{equation}
These expressions reveal how centrality values depend on both the singular value spectrum and the localization of singular vectors at specific vertices or edges.

The fundamental relationship with electrical network theory becomes apparent through effective resistance calculations. For any pair of nodes $i,j$, the effective resistance is given by the quadratic form \citep{klein1993resistance}
\begin{equation}
R_{ij} = (e_i-e_j)^\top L_0^+ (e_i-e_j) = [L_0^+]_{ii} + [L_0^+]_{jj} - 2[L_0^+]_{ij}.
\label{eq:effective_resistance}
\end{equation}
By summing over all target nodes $j$, we obtain a relationship that connects individual vertex centrality to global network accessibility.

Expanding this sum using the SVD representation yields for any node $i$:
\[
\sum_{j=1}^n R_{ij} = \sum_{j=1}^n \sum_{k=1}^r \frac{(u_{k,i} - u_{k,j})^2}{\sigma_k^2}.
\]
The key insight comes from exploiting the orthonormality of the left singular vectors and the fact that $U^\top \mathbf{1} = 0$ for the graph Laplacian. After algebraic manipulation (see \Cref{thm:effective_resistance} in \Cref{app:proof}), this yields:
\[
\sum_{j=1}^n R_{ij} = n \sum_{k=1}^r \frac{u_{k,i}^2}{\sigma_k^2} + \sum_{k=1}^r \frac{1}{\sigma_k^2} = n[L_0^+]_{ii} + \operatorname{tr}(L_0^+).
\]

This relationship establishes a theoretical connection between SVD vertex centrality and current-flow closeness centrality for undirected graphs~\cite{brandes2005centrality}. Since current-flow closeness is defined as $\mathrm{CFCloseness}(i) := (n-1)/\sum_{j\ne i} R_{ij}$ and $\sum_{j\ne i} R_{ij} = \operatorname{tr}(L_0^+) - [L_0^+]_{ii}$, ranking vertices by increasing $[L_0^+]_{ii}$ corresponds to ranking by decreasing current-flow closeness centrality in undirected connected graphs. This provides a theoretical foundation for interpreting SVD vertex centrality $C_v(i) = [L_0^+]_{ii}$ as measuring electrical resistance in undirected networks. However, this equivalence fundamentally breaks down for directed graphs because current-flow closeness centrality algorithms convert directed graphs to undirected form, discarding directional information that SVD centrality preserves through the oriented incidence matrix as we further detail in \Cref{sec:theoretical_properties_implementation}.

The breakdown of equivalence in directed graphs motivates the primary contribution of this work: SVD centrality provides a principled approach for directed network analysis without losing directional structure. Unlike betweenness centrality, which suffers from implementation-dependent tie-breaking and sparse results in directed graphs, SVD centrality leverages the spectral properties of the oriented incidence matrix to provide dense, mathematically consistent rankings for all vertices and edges.

\paragraph{Physical Interpretation and Network Theory Implications.} The SVD-effective resistance relationship in undirected graphs reveals that our centrality framework reflects fundamental physical principles governing flow in networked systems. For directed networks, the oriented incidence matrix captures asymmetric flow patterns that traditional undirected measures cannot represent, making SVD centrality particularly valuable for applications requiring directional sensitivity.

\subsection{Extension to Hypergraphs}
\label{sec:hypergraph_extension_methods}

The SVD centrality framework naturally extends to hypergraphs, where edges (hyperedges) can connect multiple vertices simultaneously. For a hypergraph $H = (V, E)$ with vertex set $V$ and hyperedge set $E$, we define the node-hyperedge incidence matrix $B \in \mathbb{R}^{|V| \times |E|}$ following the standard hypergraph representation where:
$$
B_{ih} = \begin{cases}
1 & \text{if vertex } i \text{ belongs to hyperedge } h \\
0 & \text{otherwise}
\end{cases}
$$

The hypergraph Hodge Laplacians are then defined as:
\begin{align}
L_0^{(H)} &= BB^\top \quad \text{(vertex Laplacian)} \\
L_1^{(H)} &= B^\top B \quad \text{(hyperedge Laplacian)}
\end{align}

\begin{definition}[Hypergraph SVD centrality]
\label{def:hypergraph_svd_centrality}
For a hypergraph with incidence matrix $B = U\Sigma V^\top$, the vertex and hyperedge centralities are:
\begin{align}
C_v^{(H)}(i) &= [L_0^{(H)+}]_{ii} = \sum_{k:\sigma_k>0} \frac{u_{k,i}^2}{\sigma_k^2} \label{eq:hypergraph_vertex_centrality}\\
C_e^{(H)}(h) &= [L_1^{(H)+}]_{hh} = \sum_{k:\sigma_k>0} \frac{v_{k,h}^2}{\sigma_k^2} \label{eq:hypergraph_edge_centrality}
\end{align}
\end{definition}

This generalization proves particularly valuable for complex systems with multiway interactions: ecological pollination networks, epidemiological contact patterns, scientific collaborations, and legislative co-sponsorship structures. Experimental validation on real-world hypergraph datasets (\S\ref{sec:hypergraph_experiments}) demonstrates the framework's effectiveness, revealing a fundamental advantage over traditional measures: SVD centrality provides dense rankings that avoid the sparsity issues characteristic of betweenness centrality in hypergraph structures.

\subsection{Theoretical Properties and Implementation}
\label{sec:theoretical_properties_implementation}

\paragraph{Hub and Authority Centralities for Directed Networks.} A key insight is that vertex centrality $C_v$ is orientation-invariant, making directed networks require an \emph{aggregation step} to produce orientation-aware node scores. We accomplish this in the most principled way: compute orientation-sensitive edge centralities $C_e(e)$, then aggregate incoming/outgoing edge scores to nodes.\begin{definition}[Authority and hub centralities (aggregation)]
Let $\mathcal I_{\mathrm{in}}\in\{0,1\}^{n\times m}$ be the incidence indicator for incoming edges and $\mathcal I_{\mathrm{out}}$ for outgoing edges.  Define
$$
\mathbf c_{\mathrm{auth}} \;=\; \mathcal I_{\mathrm{in}}\,\mathbf c_e,\qquad
\mathbf c_{\mathrm{hub}} \;=\; \mathcal I_{\mathrm{out}}\,\mathbf c_e,
$$
where $\mathbf c_e$ is the vector with components $C_e(e) = \sum_{k=1}^{r} v_{k,e}^2/\sigma_k^2$.  Optionally, to incorporate intrinsic vertex accessibility, use a convex combination
$$
\mathbf c_{\mathrm{auth}}^\alpha \;=\; \alpha\,\mathbf c_v + (1-\alpha)\,\mathcal I_{\mathrm{in}}\,\mathbf c_e,\qquad
\mathbf c_{\mathrm{hub}}^\alpha \;=\; \alpha\,\mathbf c_v + (1-\alpha)\,\mathcal I_{\mathrm{out}}\,\mathbf c_e,
$$
with $\alpha\in[0,1]$.
\end{definition}

The convex-combination variant yields a family of orientation-aware scores interpolating between pure global accessibility ($\alpha=1$) and pure edge-driven orientation-aware scores ($\alpha=0$). This convex form is useful when one wants to temper edge-aggregation by vertex-level accessibility.

\paragraph{Orientation Invariance Property.}
A key structural property ensures robustness of the spectral framework: the Laplacians $L_0=BB^\top$ and $L_1=B^\top B$ are invariant in the sense relevant to our centralities under edge-orientation flips. This orientation invariance provides formal justification that $C_v$ is intrinsically undirected and validates our approach of building orientation-aware node scores by aggregating oriented edge centralities.

The orientation invariance property can be stated precisely: if $\widetilde B$ is obtained from $B$ by flipping the orientation of any subset of edges (multiplying the corresponding columns by $-1$), then $BB^\top = \widetilde B\widetilde B^\top$, while $\widetilde B^\top \widetilde B$ is related to $B^\top B$ by a diagonal sign conjugation, $\widetilde B^\top \widetilde B = D\,(B^\top B)\,D$, where $D$ is diagonal with entries in $\{\pm1\}$. Consequently, $L_0^+$ is unchanged by orientation flips, $L_1$ and $\widetilde L_1$ share the same spectrum, and the diagonal centralities $C_v$ and $C_e$ remain invariant under edge-orientation changes.

This orientation invariance follows directly from the matrix algebra: flipping orientation multiplies column $j$ of $B$ by $-1$, transforming $b_j\mapsto -b_j$. Since $BB^\top = \sum_j b_j b_j^\top$ and $\widetilde B\widetilde B^\top = \sum_j (\pm b_j)(\pm b_j)^\top = \sum_j b_j b_j^\top$, we have $BB^\top=\widetilde B\widetilde B^\top$. For the edge Gram, however, one obtains $\widetilde B^\top\widetilde B = D\,(B^\top B)\,D$, so the two matrices are not entrywise identical but are congruent and therefore share the same spectrum. The Moore--Penrose pseudoinverse transforms accordingly as $(\widetilde B^\top\widetilde B)^+ = D\,(B^\top B)^+\,D$, and because $D^2=I$ the diagonal entries used to define the edge centrality satisfy $\operatorname{diag}\big((\widetilde B^\top\widetilde B)^+\big)=\operatorname{diag}\big((B^\top B)^+\big)$.

\paragraph{Truncated SVD}
\label{par:truncated_svd}
The centrality measures $C_v(i)$ and $C_e(e)$ are defined as sums over the non-zero singular values of the incidence matrix. Mathematically, this corresponds to using the compact SVD $B = U_r \Sigma_r V_r^\top$, where $U_r \in \mathbb{R}^{n \times r}$, $V_r \in \mathbb{R}^{m \times r}$, and $\Sigma_r = \operatorname{diag}(\sigma_1, \dots, \sigma_r)$ contains the $r = \operatorname{rank}(B)$ positive singular values. The pseudoinverses of the Hodge Laplacians are then given by:
\[
L_0^+ = U_r \Sigma_r^{-2} U_r^\top, \quad
L_1^+ = V_r \Sigma_r^{-2} V_r^\top.
\]
This formulation explicitly uses only the non-zero singular values, avoiding division by zero and ensuring numerical stability. The truncation is not arbitrary but is grounded in the algebraic structure of the incidence matrix: the zero singular values correspond to the harmonic subspace (kernel of $L_0$ and $L_1$), which represents global topological invariants (connected components and cycles) that do not contribute to the diffusive flow captured by centrality. By discarding these zero modes, we focus precisely on the modes that govern information propagation and resistance distances. Moreover, in practical computations, we may further truncate to the largest $k < r$ singular values to approximate the centrality efficiently while preserving the dominant structural patterns, as the smallest positive singular values contribute minimally to the centrality sums due to the $1/\sigma_k^2$ weighting. The quality of such low-rank approximations can be measured by the Frobenius norm of the reconstruction error: $\|B - U_k \Sigma_k V_k^\top\|_F = \sqrt{\sum_{i=k+1}^{r} \sigma_i^2}$, which quantifies the spectral energy discarded by truncation. This truncated approach is especially valuable for large-scale networks, where full SVD is computationally prohibitive.

\paragraph{Regularization for Numerical Stability.} Practical implementations require regularization for numerical stability when working with directed networks where the incidence matrix may have small singular values. Our experimental implementation uses matrix-level regularization, replacing the incidence matrix $B$ with:
$$M = \sqrt{\lambda} \, B + \sqrt{1-\lambda} \, I_{n \times m}$$
where $I_{n \times m}$ is an appropriately dimensioned identity-like matrix and $\lambda \in [0,1]$ interpolates between original structure ($\lambda = 1$) and regularization ($\lambda = 0$). The centrality measures follow the same spectral formulation:
\begin{align}
C_v(i) &= \sum_{k=1}^{r} \frac{u_{M,k,i}^2}{\sigma_{M,k}^2}, \qquad
C_e(e) = \sum_{k=1}^{r} \frac{v_{M,k,e}^2}{\sigma_{M,k}^2}
\end{align}
using the SVD $M = U_M \Sigma_M V_M^\top$.

\paragraph{Normalized Scores for Ranking.} For applications preferring ``larger is better' scores, the SVD centrality values can be inverted and normalized:
$$
S_v(i) = \frac{1/(C_v(i) + \tau)}{\max_\ell 1/(C_v(\ell)+\tau)}, \qquad
S_e(e) = \frac{1/(C_e(e) + \tau)}{\max_{\ell} 1/(C_e(\ell)+\tau)},
$$
with small $\tau>0$ (e.g., $\tau=10^{-8}$) to avoid division by zero. These $S_v,S_e\in(0,1]$ are convenient for visualization and ranking. In our experimental implementation, we use the normalized scores in the aggregation step, computing $\mathbf s_{\mathrm{auth}}^\alpha = \alpha\,\mathbf s_v + (1-\alpha)\,\mathcal I_{\mathrm{in}}\,\mathbf s_e$ and $\mathbf s_{\mathrm{hub}}^\alpha = \alpha\,\mathbf s_v + (1-\alpha)\,\mathcal I_{\mathrm{out}}\,\mathbf s_e$, where $\mathbf s_v$ and $\mathbf s_e$ are the vectors with components $S_v(i)$ and $S_e(e)$ respectively, and subsequently normalize the resulting hub and authority scores ($S_{auth}, S_{hub}$) by their respective maximum values to ensure they lie in $[0,1]$.

\FloatBarrier
\section{Experimental Validation}
\label{sec:experimental_results}

Our experimental validation follows a systematic progression: first establishing theoretical equivalence for undirected networks, then demonstrating the unique advantages for directed networks where traditional measures face limitations, and finally showing practical applications across diverse real-world network domains. This progression reveals how SVD centrality bridges the gap between spectral graph theory and practical network analysis for directed systems.

\subsection{Equivalence Validation}

We begin our experimental validation by confirming the theoretical equivalence between SVD vertex centrality and current-flow closeness centrality in undirected networks. This validation establishes the physical grounding of our approach before demonstrating its unique capabilities for directed networks. By comparing against Current-Flow Closeness—which is mathematically defined via effective resistance—we verify that our spectral decomposition correctly captures the diffusive properties of the network.

Table~\ref{tab:equivalence_validation} presents the correlation analysis across three distinct undirected network topologies chosen to span different structural regimes: social community structure, stochastic connectivity, and regular lattice geometry.

\begin{table}[h]
\centering
\caption{Pearson correlation coefficients between SVD vertex centrality and Current-Flow Closeness. The consistently high correlations across heterogeneous structures validate the theoretical connection to electrical resistance principles.}
\label{tab:equivalence_validation}
\begin{tabular}{lccc}
\toprule
Network & Nodes & Baseline Measure & Pearson $\rho$ \\
\midrule
Zachary's Karate Club & 34 & Current-Flow Closeness & 0.928 \\
Random ER(15, 0.3) & 15 & Current-Flow Closeness & 0.975 \\
Path Graph $P_8$ & 8 & Current-Flow Closeness & 0.986 \\
\bottomrule
\end{tabular}
\end{table}

The benchmark dataset is Zachary's Karate Club \citep{zachary1977information}, a canonical social network ($N=34$, $M=78$) exhibiting strong community structure and non-trivial hierarchy. As shown in Figure~\ref{fig:zachary_comparison}, the SVD centrality (left) and current-flow closeness (right) produce strikingly similar distributions ($\rho = 0.928$). Both measures correctly identify the administrator (Node 1) and the instructor (Node 34) as the primary structural poles—clearly highlighted in Figure~\ref{fig:zachary_comparison} as the two main poles of the network—reflecting their roles as centers of gravity in the social force field. This agreement confirms that the dominant singular vectors of the incidence matrix encode the same global accessibility information as the inverse Laplacian kernel used in random walk definitions.

\begin{figure}[htbp]
\centering
\includegraphics[width=\textwidth]{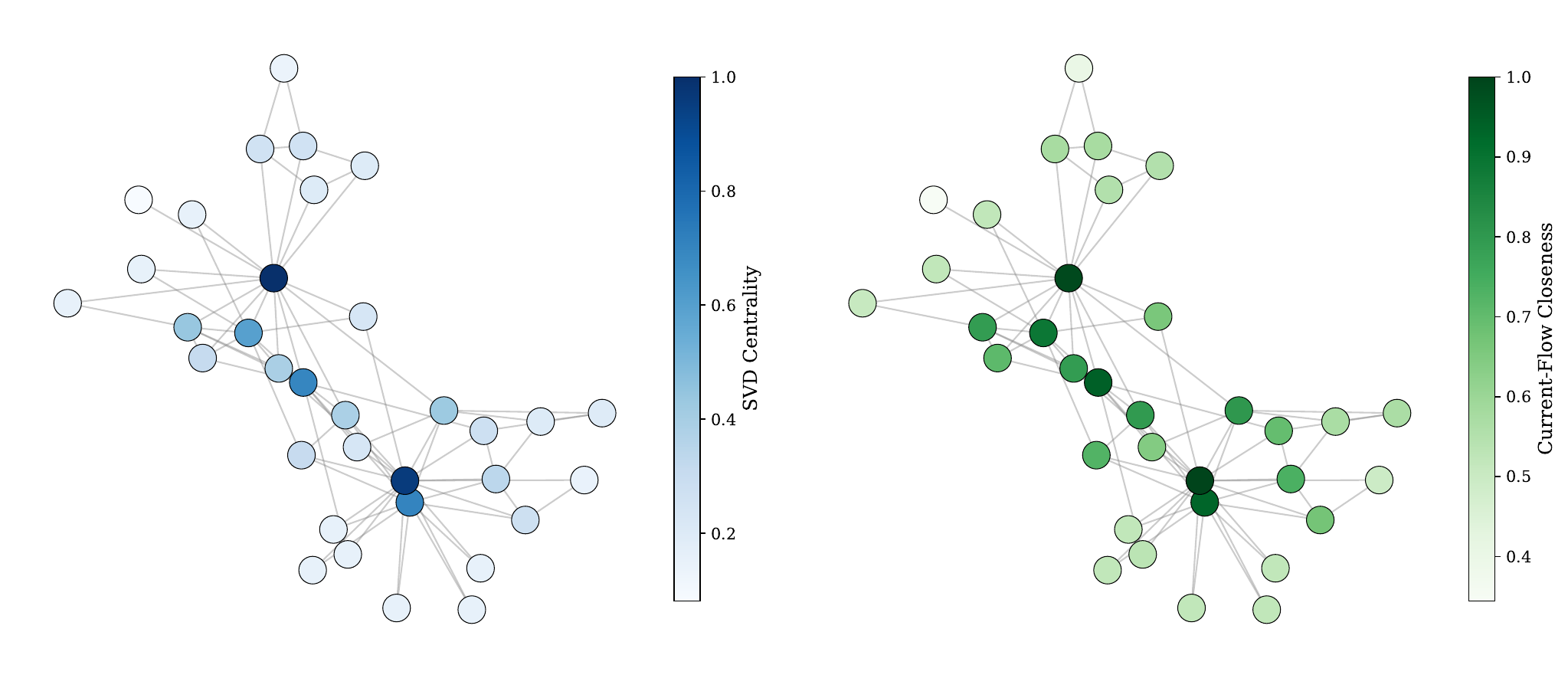}
\caption{Visual comparison of SVD vertex centrality (left) and current-flow closeness centrality (right) on Zachary's karate club network. The strong correlation ($\rho = 0.928$) and identical structural highlighting confirm that SVD centrality effectively captures electrical resistance principles in undirected networks.}
\label{fig:zachary_comparison}
\end{figure}

The analysis of synthetic graphs reveals further insights into the method's geometric fidelity. The Path Graph $P_8$, representing a 1D regular lattice, yields the highest correlation ($\rho = 0.986$). In this simple geometry, effective resistance scales linearly with geodesic distance, and the SVD's harmonic modes align perfectly with the standing waves of the 1D Laplacian. The Erdős-Rényi graph ($N=15, p=0.3$), representing a stochastic regime with short average path lengths, also shows exceptional agreement ($\rho = 0.975$). This suggests that our spectral formulation is robust across both regular, high-diameter lattices and random, small-world topologies.

It is important to address why the correlations are not exactly unity ($\rho = 1.0$) despite the theoretical equivalence. The deviations arise from necessary computational distinctions. Our implementation employs the regularization $M = \sqrt{\lambda} B + \sqrt{1-\lambda} I$, with $\lambda=0.99$ and $\tau=10^{-8}$, to ensure numerical stability and to handle potential rank-deficiencies in disconnected components—a common issue in real-world data. Standard current-flow solvers, by contrast, typically operate on the unregularized Laplacian pseudoinverse. These regularization terms introduce a small ``spectral bias'' that favors stability but creates minor numerical residuals compared to the raw inverse. However, the strong linear trends shown in Figure~\ref{fig:svd_vs_closeness_comparison} confirm that the fundamental ranking principle—minimizing the effective resistance distance to all other nodes—is preserved.

\begin{figure}[htbp]
\centering
\includegraphics[width=0.95\textwidth]{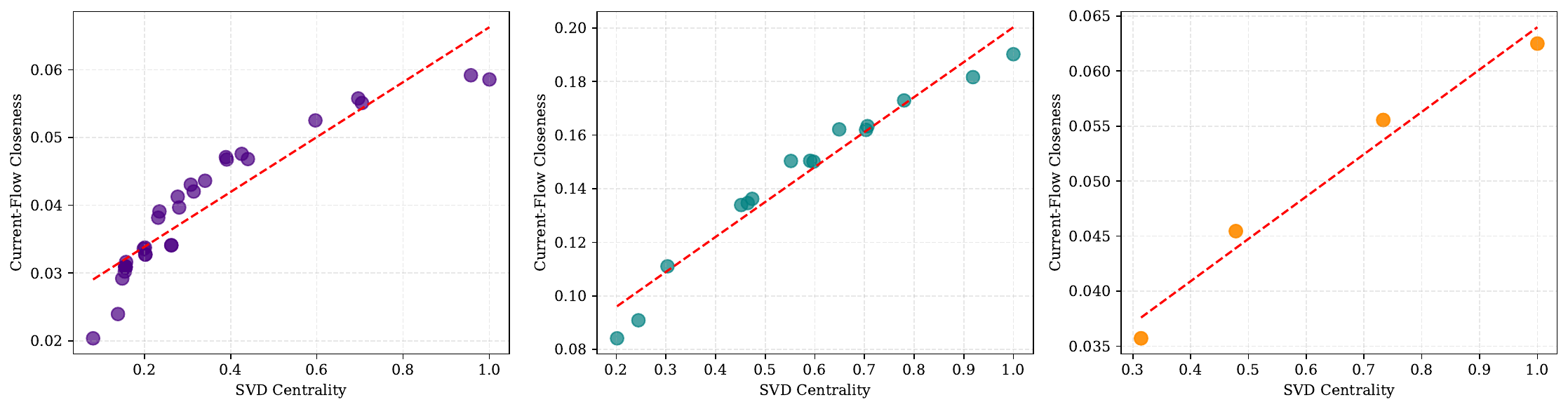}
\caption{Correlation analysis between SVD centrality and Current-Flow Closeness Centrality. Left: Zachary's Karate Club ($\rho=0.928$). Center: Random ER Graph ($\rho=0.975$). Right: Path Graph $P_8$ ($\rho=0.986$). The tight linear relationships across all topologies confirm the method's theoretical validity.}
\label{fig:svd_vs_closeness_comparison}
\end{figure}

Physically, these results demonstrate that SVD centrality can be interpreted as a measure of ``diffusive accessibility'. Just as current-flow closeness measures how easily current spreads from a node to the rest of the network (or conversely, how easily a random walker reaches the node), SVD centrality captures this global connectivity through the energy spectrum of the incidence matrix. This establishes a rigorous physical baseline for the method before we extend it to directed networks, where the notion of reversible random walks breaks down and our orientation-preserving spectral approach becomes essential.
These experimental results validate our theoretical foundation while establishing the baseline for directed network innovations where traditional measures may encounter challenges in preserving directional structure.

\subsection{Controlled Grid Experiment}
\label{sec:controlled-grid}

We evaluate the directional sensitivity of our spectral definitions on a deliberately simple but revealing toy model: a $4\times4$ directed grid (16 nodes) with an embedded hub–authority motif. In the construction used for the figures, the \emph{hub} at grid coordinate $(2,2)$ has directed edges \emph{to} all other nodes in the graph (high out-degree, low in-degree), while the \emph{authority} at $(2,3)$ has directed edges \emph{from} all other nodes (high in-degree, low out-degree). This creates an unambiguous test: the hub is a primary broadcaster, the authority is a primary receiver, and it is desirable that a directional centrality measure should identify these roles clearly.

Figure~\ref{fig:grid_raw_svd_vs_betweenness} juxtaposes three visualizations: SVD node scores \(S_v\) (left), SVD edge scores \(S_e\) (center), and classical edge betweenness (right). The spectral edge centrality \(S_e\) in the center panel successfully identifies the critical structural edges: the outgoing edge from the hub $(2,2)$ that connects to the authority $(2,3)$ receives the highest score, reflecting its role in the directed flow structure as the most influential edge, since it connects the two key nodes. Also, these two most influential nodes are the ones identified by SVD node score $S_v$. Edge betweenness, by contrast, fails to highlight these structurally decisive edges. Because the graph is highly connected with numerous equivalent shortest paths between most node pairs, betweenness distributes values according to algorithmic tie-breaking rather than recognizing the fundamental directional roles encoded in the hub-authority motif.

\begin{figure}[ht]
  \centering
  \includegraphics[width=\linewidth]{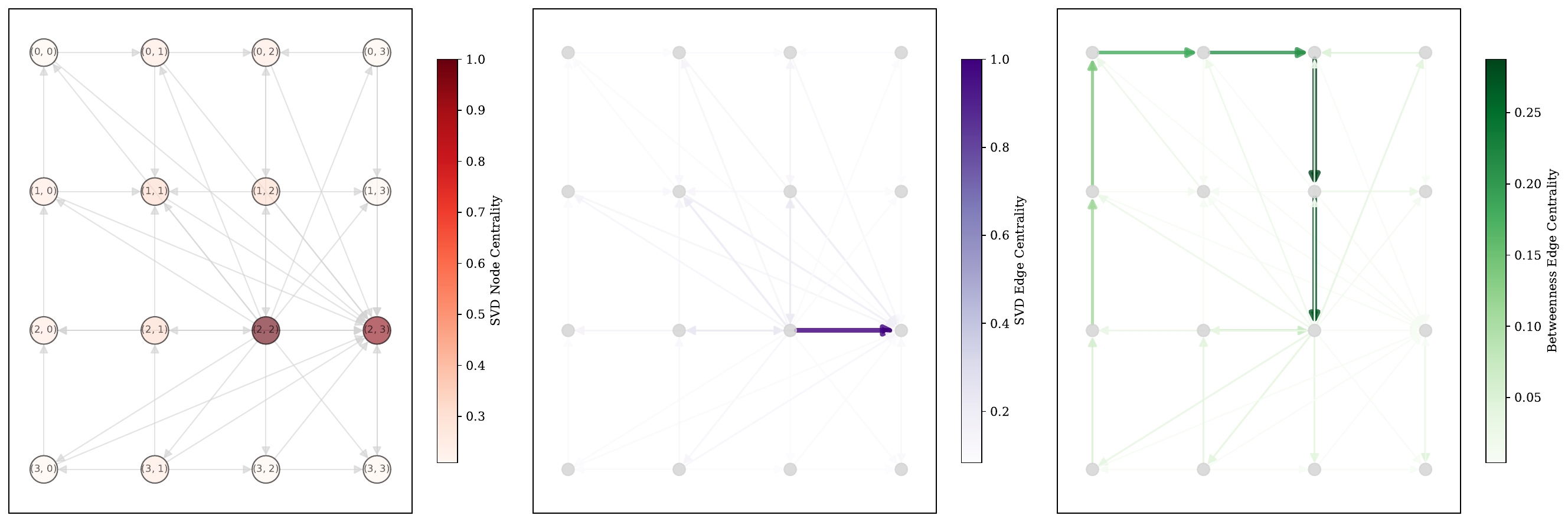}
  \caption{Raw SVD node centrality (left), SVD edge centrality (center) and edge betweenness (right) on the $4\times4$ directed grid. The SVD edge measure highlights edges connected to the planted hub at $(2,2)$ (outgoing) and authority at $(2,3)$ (incoming); betweenness distributes values uniformly due to path multiplicity and fails to identify the directional roles.}
  \label{fig:grid_raw_svd_vs_betweenness}
\end{figure}

The limitations of betweenness in this setting are worth noting. In a highly connected directed graph where most nodes can reach each other through multiple shortest paths of equal length, betweenness tends to assign similar or low values to many edges, or distributes credit among symmetric alternatives based on tie-breaking rules. As a result, the measure may appear relatively uniform or less informative about which nodes or edges play key directional roles. This does not mean betweenness is incorrect—it answers the question of which edges participate in shortest paths—but this perspective may not always highlight broadcasters and receivers in densely connected directed networks.

Figure~\ref{fig:grid_hub_authority} demonstrates the core advantage of the spectral approach at the node level. The SVD hub centrality (left panel) concentrates sharply at $(2,2)$, correctly identifying the broadcaster that sends to all other nodes. The SVD authority centrality (center panel) concentrates sharply at $(2,3)$, correctly identifying the receiver that accepts from all other nodes. These measures successfully recover the planted directional roles with minimal ambiguity. Node betweenness (right panel), by contrast, shows a relatively uniform distribution across the grid, failing to distinguish the hub and authority from other nodes. This uniform pattern arises because, in this highly connected structure, many nodes participate equally in shortest paths due to the abundance of equivalent routes.

\begin{figure}[ht]
  \centering
  \includegraphics[width=\linewidth]{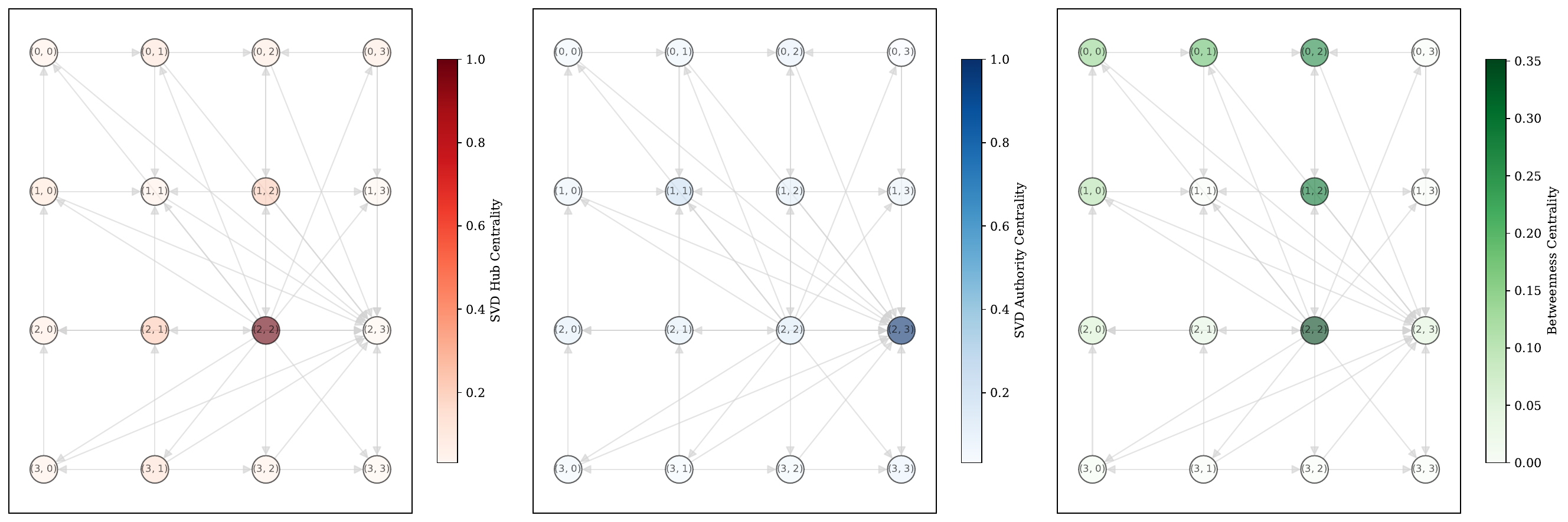}
  \caption{SVD hub centrality (left), SVD authority centrality (center) and node betweenness (right) on the $4\times4$ directed grid. The SVD node measures correctly identify the planted broadcaster $(2,2)$ (with edges to all nodes) and receiver $(2,3)$ (with edges from all nodes), while betweenness shows near-uniform values and fails to distinguish these directional roles.}
  \label{fig:grid_hub_authority}
\end{figure}

The controlled grid therefore reveals a fundamental difference in what these measures detect. Spectral centralities identify \emph{directional structural roles}: they recognize that $(2,2)$ is a broadcaster and $(2,3)$ is a receiver based on the oriented edge pattern, regardless of shortest-path considerations. Betweenness identifies \emph{shortest-path bottlenecks}, which in densely connected graphs with high path multiplicity means it often produces uninformative, near-uniform distributions that obscure directional roles.

Two practical conclusions follow directly. First, if the analytic goal is to identify broadcaster/receiver roles or to find nodes and edges that are structurally important for directed flow (influence propagation, information diffusion, directed transport), spectral centralities provide clear, interpretable answers even in dense networks. Second, betweenness remains appropriate for questions about shortest-path traffic under a specified cost metric, but in settings with substantial path multiplicity—common in many real-world directed networks—it may fail to identify the directional roles that matter most for understanding oriented dynamics. The grid experiment makes this trade-off transparent: spectral centralities detect role-based importance; betweenness detects geodesic routing, which can be ambiguous or uninformative when many equivalent routes exist.

\subsection{Real-World Network Analysis}

We now validate practical applicability across diverse real-world domains. The central question: \textit{How does our spectral framework perform when applied to the complex, heterogeneous networks that define real-world systems?}

This validation spans social networks (revealing influence patterns), biological systems (capturing regulatory hierarchies), and infrastructure networks (identifying critical connections). Each domain tests different aspects of our framework while maintaining focus on directed network analysis where traditional measures struggle.

\subsubsection{Social Networks: Dutch School Friendships}

The Dutch school friendship network \citep{snijders2010introduction} serves as an ideal substrate to test the resolving power of our spectral framework against established measures. We analyze the third wave of this longitudinal study ($N=25$ students, $M=133$ directed edges), a phase where the social hierarchy has stabilized into a dense, reciprocal structure with non-trivial community overlaps.

The comparative analysis in \Cref{fig:dutch_general} and Table~\ref{tab:dutch_general_ranks} reveals that while SVD centrality yields plausible rankings—correlating broadly with PageRank—it captures a fundamentally different physical property of the network.

\begin{figure}[t]
\centering
\includegraphics[width=\textwidth]{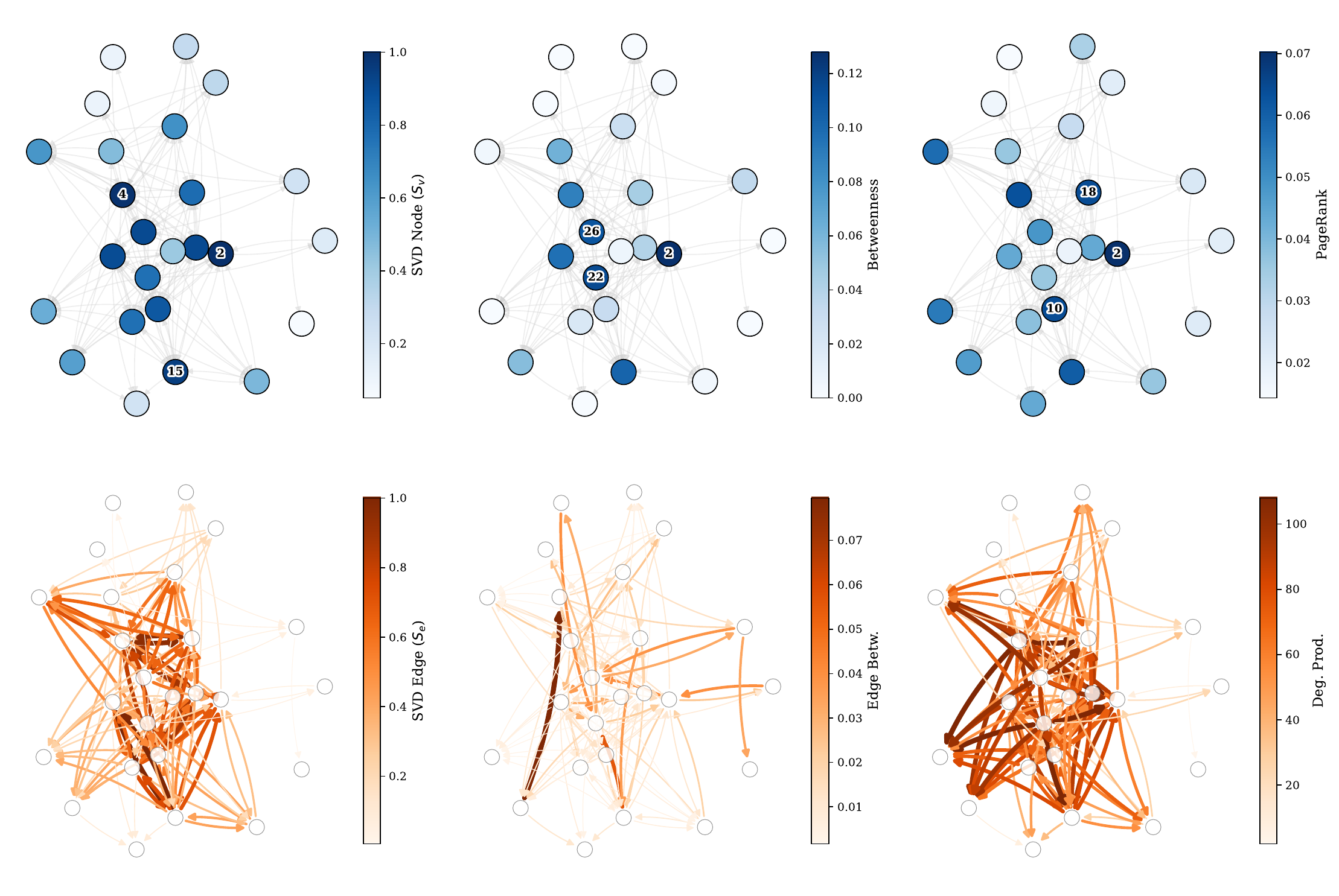}
\caption{General centrality analysis of the Dutch school friendship network. \textbf{Row 1 (Node Centrality)}: SVD ($S_v$) identifies Students 2 and 4 as global integrators, showing broad agreement with PageRank but distinct deviations from Betweenness. \textbf{Row 2 (Edge Centrality)}: The contrast is striking; SVD ($S_e$) reveals a dense ``diffusive backbone'' of ties, whereas Edge Betweenness (bottom right) is extremely sparse, highlighting only critical bridges. This dense ranking confirms SVD captures the capacity for broad information diffusion rather than just geodesic transport.}
\label{fig:dutch_general}
\end{figure}

The divergence between SVD and Betweenness Centrality highlights the distinction between \textit{global integration} and \textit{local bridging}. As shown in Table~\ref{tab:dutch_general_ranks}, Student 22 ranks 2nd in Betweenness but does not appear in the top 5 for SVD or PageRank. Physically, this suggests Student 22 acts as a bottleneck or bridge between clusters—critical for shortest-path routing but less central to the network's diffusive steady state. Conversely, Student 2 ($S_v=1.00$) dominates the spectral rankings (SVD and PageRank) but scores much lower in Betweenness (0.128). Although these scores are normalized independently and represent different structural properties, the substantial relative difference identifies Student 2 as the ``core'' of the social graph—the node most accessible via random walks or diffusive processes, even if they do not control the exclusive shortest paths. SVD centrality thus provides a metric for \textit{diffusive influence}, which is arguably more relevant for social contagion than geodesic efficiency.

\begin{table}[t]
\small
\setlength{\tabcolsep}{4pt}
\centering
\caption{Top 5 students by general node centrality (Wave 3). The ranking divergence reveals complementary structural roles: Student 22 (high Betweenness, low SVD) acts as a local bridge, while Student 2 (top SVD/PageRank) serves as the global integrator. This suggests that SVD centrality highlights a network's diffusive core—nodes that are most accessible under diffusive (spectral) dynamics—complementary to shortest-path-based notions of centrality.}
\label{tab:dutch_general_ranks}
\begin{tabular}{lccc}
\toprule
\textbf{Rank} & \textbf{SVD Node ($S_v$)} & \textbf{Betweenness} & \textbf{PageRank} \\
\midrule
1 & Stud.\ 2 (1.000) & Stud.\ 2 (0.128) & Stud.\ 2 (0.070) \\
2 & Stud.\ 4 (0.992) & Stud.\ 22 (0.115) & Stud.\ 10 (0.064) \\
3 & Stud.\ 15 (0.946) & Stud.\ 26 (0.109) & Stud.\ 18 (0.064) \\
4 & Stud.\ 7 (0.905) & Stud.\ 15 (0.102) & Stud.\ 4 (0.063) \\
5 & Stud.\ 26 (0.905) & Stud.\ 6 (0.096) & Stud.\ 15 (0.060) \\
\bottomrule
\end{tabular}
\end{table}

The directional analysis (\Cref{fig:dutch_directional}, Table~\ref{tab:dutch_directional_ranks}) demonstrates how the SVD framework decomposes influence into orthogonal components without the instability issues often associated with heuristic iterations.

\begin{table}[h!]
\small
\setlength{\tabcolsep}{4pt}
\centering
\caption{Top 5 students by directional roles. SVD centrality unifies the Hub and Authority roles in Student 4, identifying the network's primary feedback loop. In contrast, HITS splits these roles, potentially over-segmenting the dense social core. This demonstrates SVD's ability to identify robust, dual-role influencers.}
\label{tab:dutch_directional_ranks}
\begin{tabular}{lcccc}
\toprule
\textbf{Rank} & \textbf{SVD Auth ($S_{auth}$)} & \textbf{HITS Auth} & \textbf{SVD Hub ($S_{hub}$)} & \textbf{HITS Hub} \\
\midrule
1 & Stud.\ 4 (1.000) & Stud.\ 3 (0.073) & Stud.\ 4 (1.000) & Stud.\ 4 (0.096) \\
2 & Stud.\ 10 (0.977) & Stud.\ 8 (0.073) & Stud.\ 7 (0.863) & Stud.\ 26 (0.091) \\
3 & Stud.\ 15 (0.932) & Stud.\ 23 (0.069) & Stud.\ 26 (0.751) & Stud.\ 7 (0.085) \\
4 & Stud.\ 18 (0.926) & Stud.\ 2 (0.068) & Stud.\ 10 (0.737) & Stud.\ 2 (0.077) \\
5 & Stud.\ 7 (0.869) & Stud.\ 18 (0.065) & Stud.\ 15 (0.705) & Stud.\ 10 (0.076) \\
\bottomrule
\end{tabular}
\end{table}

\begin{figure}[t]
\centering
\includegraphics[width=\textwidth]{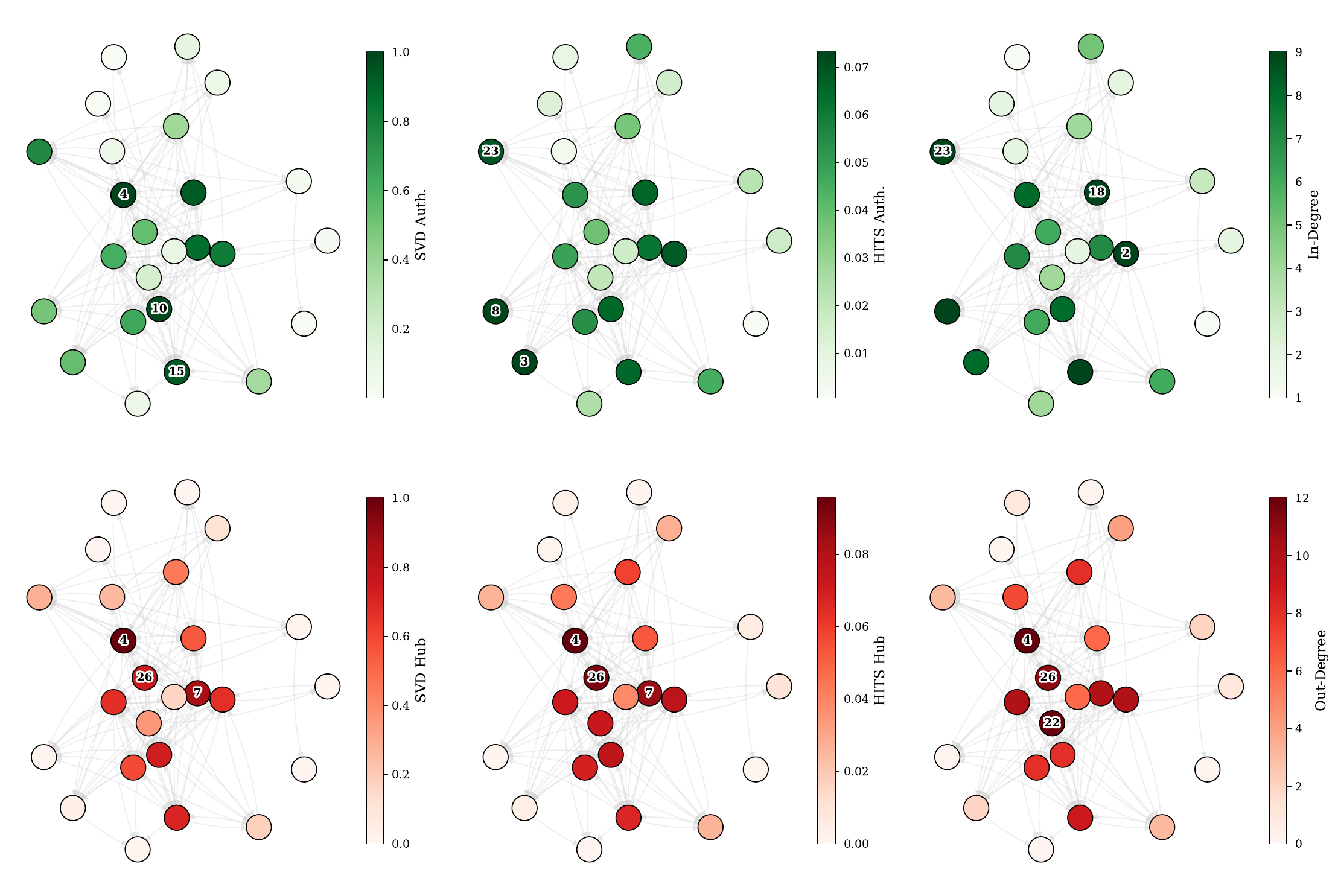}
\caption{Directional role analysis. \textbf{Row 1 (Authority)}: SVD Authority ($S_{auth}$) maps the ``potential'' landscape, identifying students with high social capital (incoming ties). \textbf{Row 2 (Hub)}: SVD Hub ($S_{hub}$) maps the ``flow'' landscape (outgoing ties). Student 4 is identified as the dual-role leader, maximizing both metrics.}
\label{fig:dutch_directional}
\end{figure}

A key insight emerges from the comparison with HITS. While HITS identifies different leaders for Authority (Student 3) and Hub (Student 4), our spectral framework identifies Student 4 as the dominant actor in \textit{both} dimensions ($S_{auth}=1.00, S_{hub}=1.00$). This suggests that Student 4 represents the ``spectral backbone'' of the class—a super-spreader who is both highly prestigious and actively engaged. SVD's identification of this coupled role is physically significant: in dense, reciprocal social groups, influence and activity are often reinforced by the same feedback loops. By diagonalizing the incidence matrix directly, our method captures this fundamental mode of the social field, offering a robust ``new tool'' for researchers to detect key players who drive the system's dynamics through both reception and transmission.

\textbf{Note on Student 2.}
As defined above (see Sec.~3 and Sec.~4.7), $S_v$ and $S_e$ originate respectively from the vertex and edge
Laplacians ($L_0$ and $L_1$). Hence $S_v$ measures global accessibility while $S_{hub},S_{auth}$ are formed by
aggregating edge-level scores $S_e$. Consequently, a node can be highly accessible (high $S_v$) even though its
incident edges are not spectrally dominant (moderate $S_e$), yielding moderate directional aggregates.
Student 2 therefore exemplifies a node that is centrally accessible in a diffusive sense but not specialized as a
strong broadcaster or receiver. This clarifies the apparent discrepancy between the global ranking in
Table~\ref{tab:dutch_general_ranks} and the directional rankings in Table~\ref{tab:dutch_directional_ranks}.

\subsubsection{Biological Systems}

Metabolic and protein interaction networks represent fundamental biological systems where directed biochemical reactions and regulatory interactions define cellular function \citep{Gao2014}. We analyze two prominent directed networks: the metabolic network of \emph{C. elegans} (nematode worm) and the protein-protein interaction (PPI) network of \emph{S. cerevisiae} (yeast), providing validation in molecular biology contexts where spectral properties reveal essential system components.

The \emph{C. elegans} metabolic network (453 metabolites, 2,025 reactions) displays a branched structure radiating from a central dense core, as shown in \Cref{fig:c_elegans_ppi} \citep{duch2005community}. The top row demonstrates how hub-authority decomposition reveals directional organization. SVD hub centrality (left panel, red) concentrates on a cluster of nodes in the central region, identifying metabolites that initiate multiple downstream pathways. SVD authority centrality (center panel, blue) highlights a partially overlapping but spatially distinct set of central nodes, corresponding to metabolites where multiple pathways converge. The spatial separation between hubs and authorities within the core reveals an underlying directional flow structure in the metabolic network. Betweenness centrality (right panel, green) identifies a different structural property, highlighting nodes that lie on many shortest paths, though it assigns low values to most peripheral nodes.

The bottom row examines reactions rather than metabolites. SVD edge centrality (left panel, purple) produces a continuous gradient throughout the network, with high-centrality reactions in the dense core and meaningful variation extending into peripheral branches. This continuous distribution enables discrimination among reactions at different structural positions. In-degree centrality (center panel, orange), here aggregated from vertices to their incident edges to provide a baseline for local connectivity, reveals one particularly high-degree node in the core—likely a metabolite participating in many transformations. Betweenness edge centrality (right panel, brown) shows a different pattern, with most reactions receiving very low values and only a few edges in the dense core region showing visible coloration. The sparsity of this distribution reflects the concentration of shortest paths through a limited subset of reactions.

\begin{figure}[t]
\centering
\includegraphics[width=\textwidth]{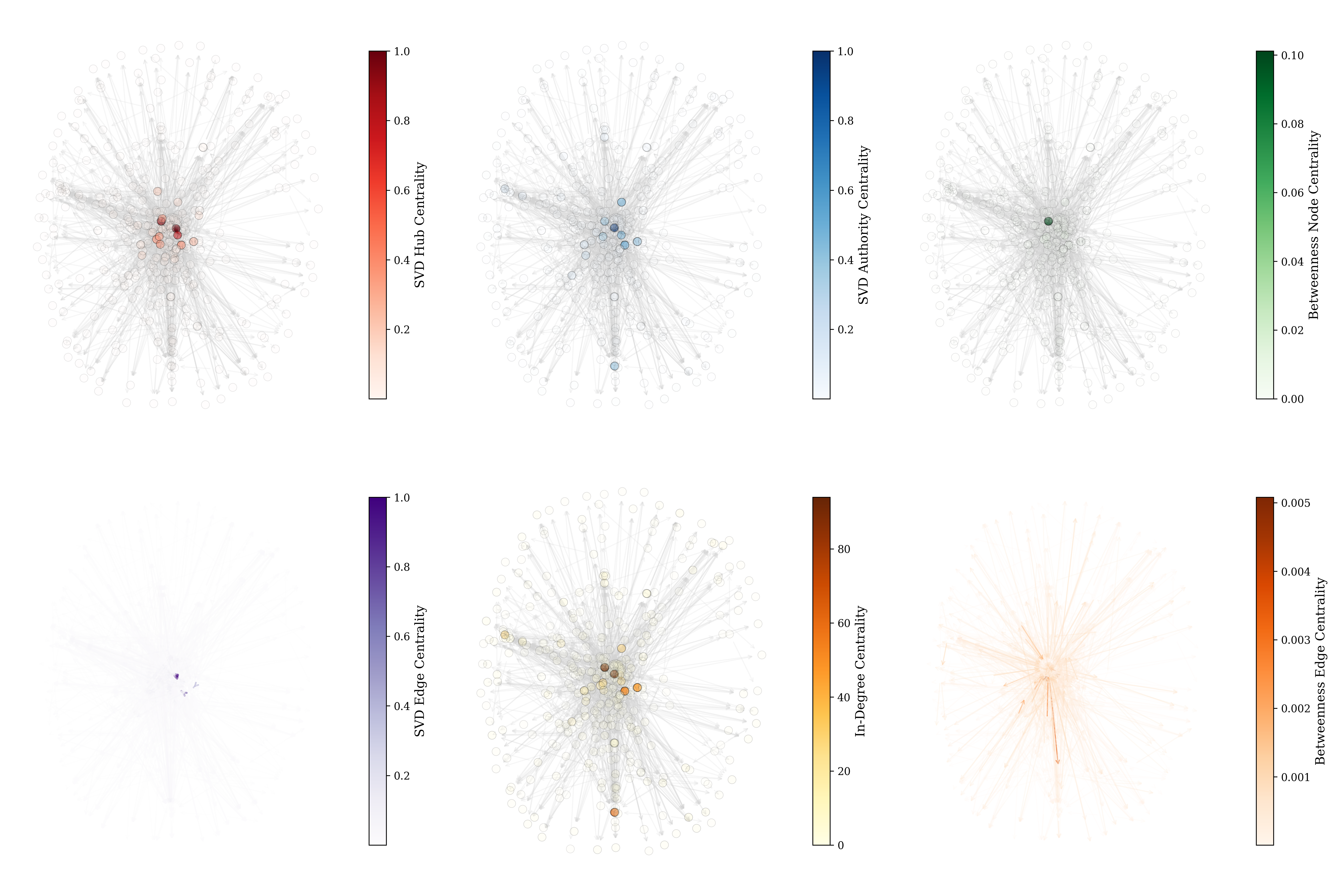}
\caption{SVD centrality analysis of the \emph{C. elegans} metabolic network (453 metabolites, 2,025 reactions) \citep{duch2005community}. Top row: Hub, Authority, and Betweenness node centralities. Bottom row: SVD Edge, In-Degree, and Betweenness edge centralities. Hub and authority measures identify spatially distinct subsets of core metabolites, revealing directional flow structure. Specifically, while certain nodes like Node 2 exhibit high SVD node centrality due to their global structural role as integrators, they may not dominate the directional hub or authority scores, which isolate pure broadcasting and receiving behaviors.}
\label{fig:c_elegans_ppi}
\end{figure}

The yeast PPI network scales to 1,870 proteins and 4,480 interactions, exhibiting a densely packed circular structure with concentric layers (\Cref{fig:yeast_ppi}) \citep{coulomb2005gene}. This layout reflects modular organization where highly connected proteins cluster toward the center. SVD hub centrality (top left, red) identifies proteins in the inner layers with high outgoing interaction propensity, while SVD authority centrality (top center, blue) highlights proteins that receive many incoming interactions. The substantial overlap between high-hub and high-authority proteins in the inner core suggests many proteins serve dual regulatory roles. Betweenness centrality (top right, green) shows only faint coloration on a handful of innermost proteins, with most nodes displaying near-zero values. This suggests that shortest paths in this dense network are distributed across many alternative routes rather than concentrating through specific bottlenecks.

The edge analysis reveals distinct patterns across measures. SVD edge centrality (bottom left, purple) identifies a ring-like pattern of high-centrality interactions distributed throughout inner and middle layers. This ring structure represents a backbone maintaining network cohesion across functional modules. The pattern shows density variations suggesting different levels of module connectivity in different network regions. In-degree centrality (bottom center, orange) highlights individual proteins receiving many interactions, with several inner core nodes showing very high values. The distribution decays gradually from core to periphery, reflecting hierarchical organization. Betweenness edge centrality (bottom right, brown) appears almost entirely uniform at very low values, with minimal coloration visible only on a tiny fraction of edges in the network center. This extreme sparsity indicates that shortest paths are distributed across numerous alternative routes, a signature of redundancy in biological networks.

\begin{figure}[t]
\centering
\includegraphics[width=\textwidth]{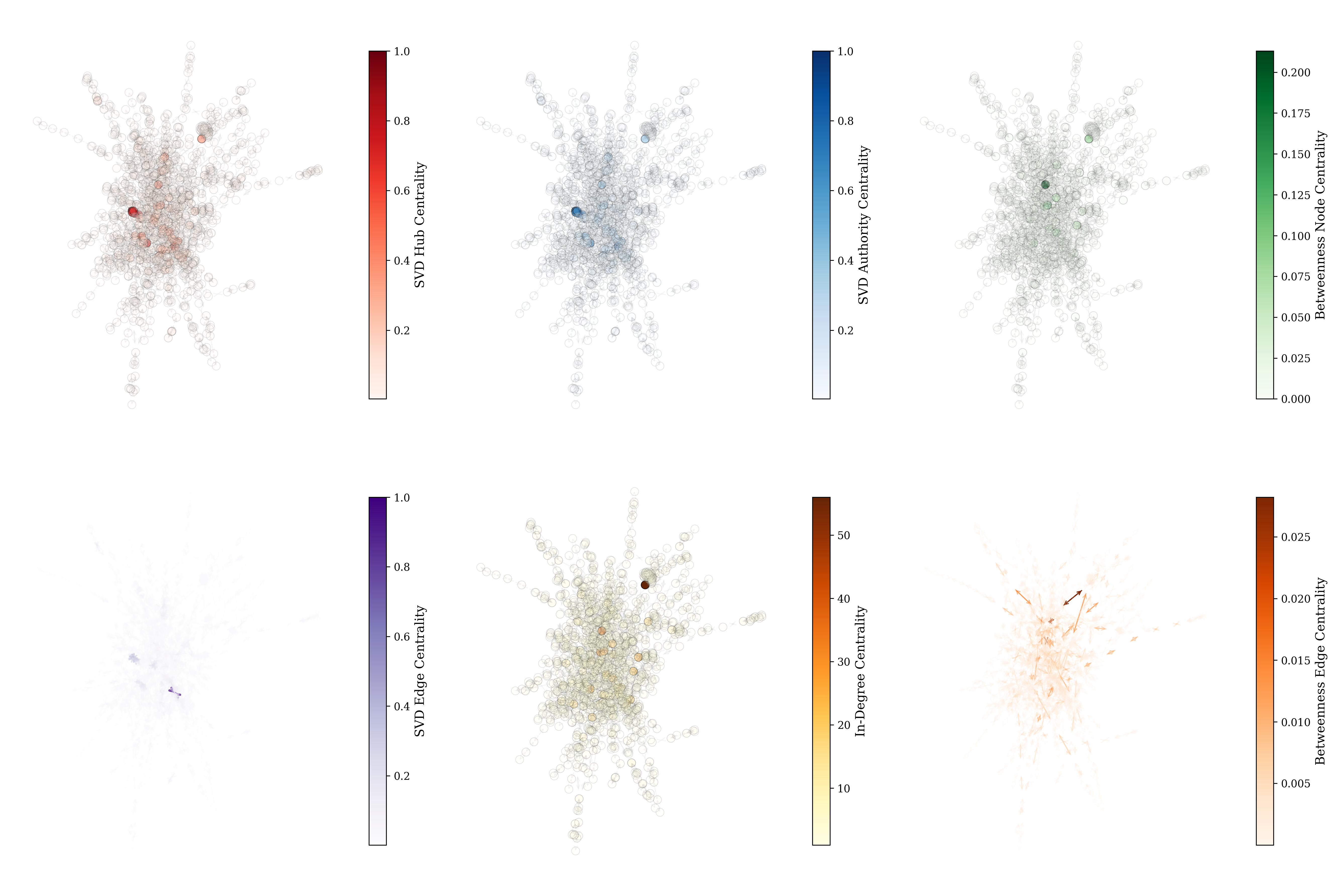}
\caption{SVD incidence centrality analysis of the yeast protein-protein interaction network \citep{coulomb2005gene}. SVD edge centrality reveals a ring-like backbone of structurally critical interactions, while betweenness edge centrality assigns near-zero values to most interactions due to high path redundancy.}
\label{fig:yeast_ppi}
\end{figure}

\subsubsection{Transportation Systems}

Transportation networks provide concrete physical interpretation for spectral centrality, as they represent directed flow systems where our measures correspond to concepts like traffic distribution and network accessibility.

The OpenFlights worldwide airline network (2,905 airports, 30,442 routes) shows a dense star-like structure with extensive connectivity radiating from a central cluster (\Cref{fig:openflights}) \citep{openflights2024}. Due to its large scale, we employed a truncated SVD approach as outlined in \Cref{par:truncated_svd}, approximating the full SVD by retaining only the top-$k$ singular values and vectors. We used binary search to determine the largest truncation rank $k$ that remained computationally feasible within our runtime and memory constraints—a practical necessity given the matrix dimensions of $2{,}905 \times 30{,}442$. As an alternative or complementary criterion, one could also select $k$ such that a significant portion of the spectral energy (measured by $\sum_{i=1}^k \sigma_i^2 / \sum_{i=1}^r \sigma_i^2$) is preserved, ensuring that the dominant structural patterns governing global accessibility are captured. This reflects the hub-and-spoke architecture of modern aviation. SVD hub centrality (top left, red) identifies multiple airports in the central cluster as major origin points with extensive outgoing connectivity. SVD authority centrality (top center, blue) highlights a partially overlapping set as primary destination aggregators. The overlap indicates that major hubs serve dual roles as both generators and receivers of traffic, though subtle differences reveal asymmetric flow patterns. Betweenness centrality (top right, green) shows moderate values for a subset of core airports but near-zero values for most peripheral airports, indicating they do not serve as critical transfer points in shortest-path routing.

For edges, SVD centrality (bottom left, purple) produces a radial gradient emanating from the core, with high-centrality routes connecting hubs and meaningful values extending along radiating arms. This captures both major trunk routes and regionally critical connections. In-degree centrality (bottom center, orange) highlights airports receiving many incoming routes, showing heterogeneity even among major hubs. Betweenness edge centrality (bottom right, brown) appears almost invisible, with only a handful of edges between central hubs showing minimal coloration. The extreme sparsity reflects high redundancy in global aviation where multiple routing options exist between most airport pairs.

\begin{figure}[t]
\centering
\includegraphics[width=\textwidth]{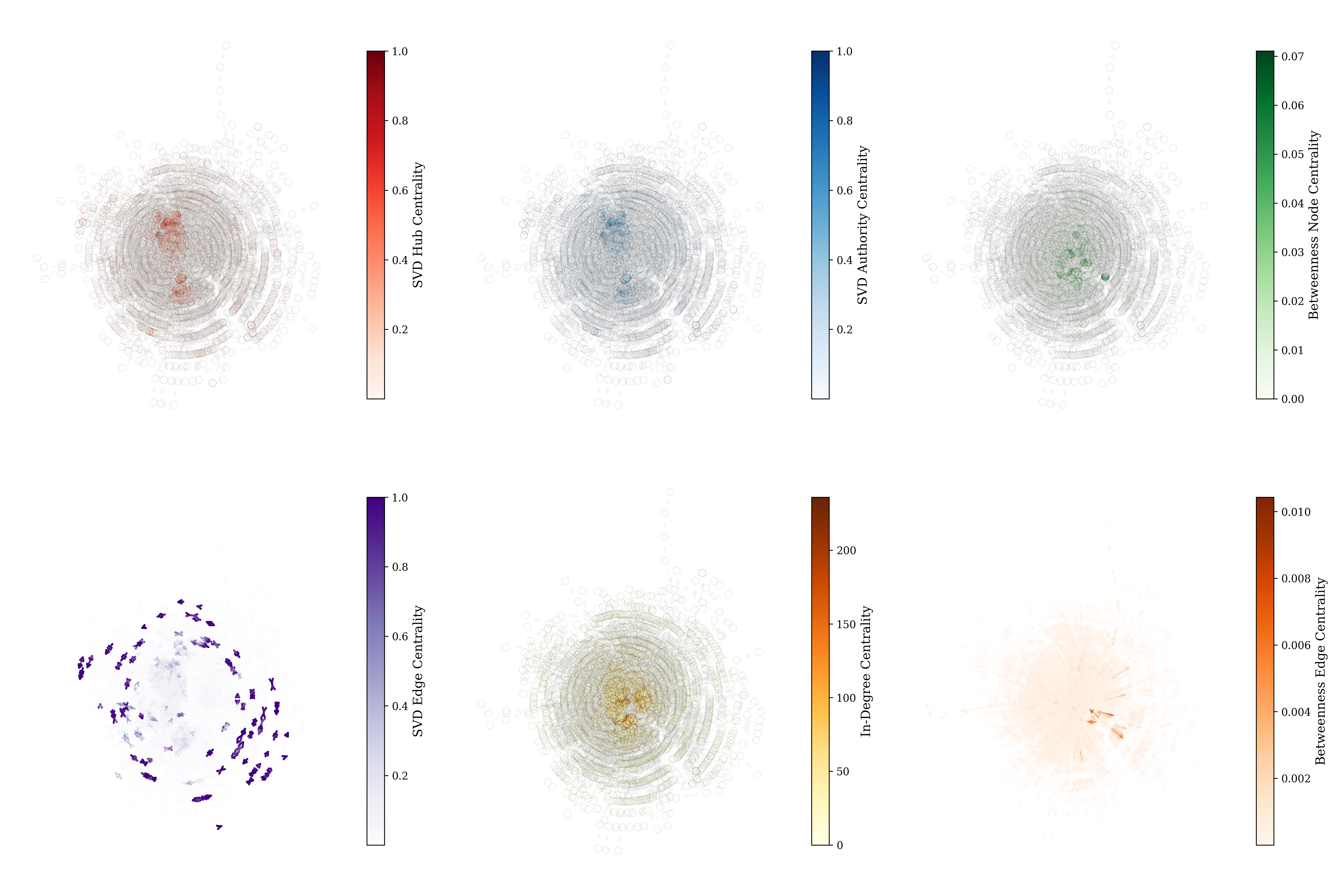}
\caption{SVD incidence centrality analysis of the OpenFlights global airline network \citep{openflights2024} (2,905 airports, 30,442 flight routes). The spectral framework produces dense rankings identifying both trunk routes and regionally critical connections.}
\label{fig:openflights}
\end{figure}

The European road network (1,174 cities, 1,417 roads) exhibits a radial pattern reflecting geographic distribution and hierarchical infrastructure organization (\Cref{fig:euroroad}) \citep{subelj2011robust}; the visualization layout is based on these geographic city coordinates. SVD hub centrality (top left, red) identifies cities in the central-lower region as major traffic generation points, while SVD authority centrality (top center, blue) highlights partially overlapping cities as major destination points. The spatial differences indicate directional asymmetries in traffic flow. Betweenness centrality (top right, green) shows sparse coloration on a handful of central cities, with most nodes displaying near-zero values.

SVD edge centrality (bottom left, purple) reveals clear hierarchical backbone structure: a prominent trunk of high-centrality roads—appearing as a dense, dark purple central axis—running through the network center, appearing as a potential-source component that organizes the broader connectivity, with additional arterial branches extending to multiple regions. Roads with intermediate values form a denser network of secondary connections throughout the central region. This hierarchical pattern distinguishes primary continental arteries from secondary regional connections from local routes, all based on global spectral properties rather than local topology. In-degree centrality (bottom center, orange) shows nodes with higher values concentrated in the central area, identifying natural junction points. Betweenness shows a similar pattern to the other measures.

\begin{figure}[t]
\centering
\includegraphics[width=\textwidth]{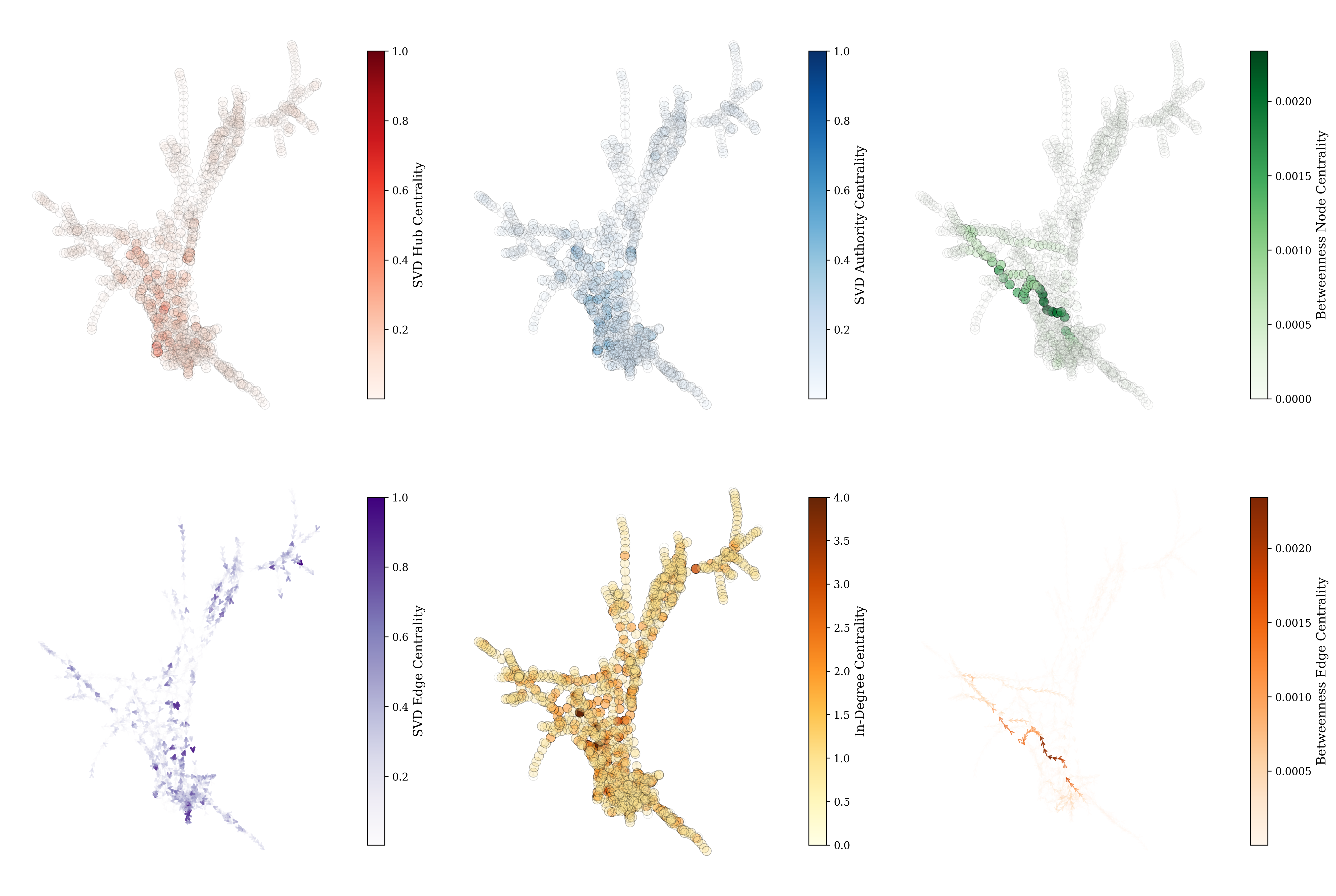}
\caption{SVD incidence centrality analysis of the European road network \citep{subelj2011robust} (1,174 cities, 1,417 roads). SVD edge centrality identifies hierarchical backbone structure with primary arteries and secondary connections based on global spectral properties.}
\label{fig:euroroad}
\end{figure}

Across these networks, a consistent pattern emerges in the relationship between spectral and path-based measures. SVD centrality produces continuous distributions that discriminate across most network elements, while betweenness concentrates on small subsets. This difference reflects their computational foundations: SVD leverages global spectral properties of the incidence matrix, while betweenness enumerates shortest paths. In dense or highly connected networks with abundant alternative paths, betweenness assigns similar low values to most elements, whereas spectral measures maintain discrimination through the eigenstructure of the Hodge Laplacians. This distinction has practical implications for applications requiring ranked prioritization across network elements.

\subsection{Hypergraph Networks}
\label{sec:hypergraph_experiments}

After validating our framework on directed graphs, we examine its behaviour on hypergraphs, where hyperedges connect multiple vertices simultaneously. We evaluate four real-world systems drawn from the XGI dataset library\footnote{Datasets from XGI library: \url{https://xgi.readthedocs.io/en/stable/xgi-data.html}.}: an ecological plant–pollinator network \citep{Bek2006PollinationNetwork}, human disease associations \citep{goh2007human}, hospital contact patterns \citep{vanhems2013estimating}, and senate bill co-sponsorship \citep{landry2024senate-bills}. \Cref{fig:hypergraph_comparison} presents a side-by-side visual comparison that highlights how SVD incidence centrality and hypergraph betweenness emphasize different structural features across domains.

We adopt a unified visual convention: node and hyperedge scores are displayed on a continuous colormap and, where necessary for cross-panel readability, scores are linearly rescaled for display only. SVD-derived centralities are computed directly from the hypergraph incidence matrix, which yields deterministic, reproducible scores. By contrast, betweenness for hypergraphs requires an explicit path model (treating a hyperedge as a single traversal step, or as an internal clique, etc.) and optional normalization; these modelling choices affect both absolute values and spatial patterns. We therefore interpret betweenness visualizations as implementation- and normalization-dependent, and we focus on qualitative, comparative statements that are robust to reasonable preprocessing choices.

\Cref{fig:hypergraph_comparison} (first row) shows the pollinator network, a small bipartite system with 16 labeled nodes. SVD node centrality produces a clear gradient that distinguishes species with high spectral influence: darker nodes (indicating higher centrality in our colormap) at the structural boundaries correspond to species that participate in low-frequency, system-spanning modes and thus act as structural generalists, while lighter nodes correspond to more specialized species. SVD hyperedge centrality likewise discriminates interaction groups according to their contribution to collective modes. Betweenness in this small hypergraph appears symmetric and clustered near a central value in the chosen visualization scale; this pattern most likely reflects the combined effect of (i) the small network size, which limits the range of distinct shortest hyperpaths, and (ii) the particular normalization or centering applied for display. The practical implication is that for very small bipartite hypergraphs, path-based measures may lack the resolution to distinguish graded structural roles that spectral measures readily reveal.

\begin{figure}[!htbp]
\centering


\begin{tabular}{@{}cccc@{\hspace{0.4cm}}c@{}}
\toprule
\multicolumn{2}{c}{\textbf{Nodes}} & \multicolumn{2}{c}{\textbf{Hyperedges}} & \\
\cmidrule(lr){1-2} \cmidrule(lr){3-4}
\textbf{SVD} & \textbf{Betweenness} & \textbf{SVD} & \textbf{Betweenness} & \\
\midrule
\includegraphics[width=0.22\textwidth]{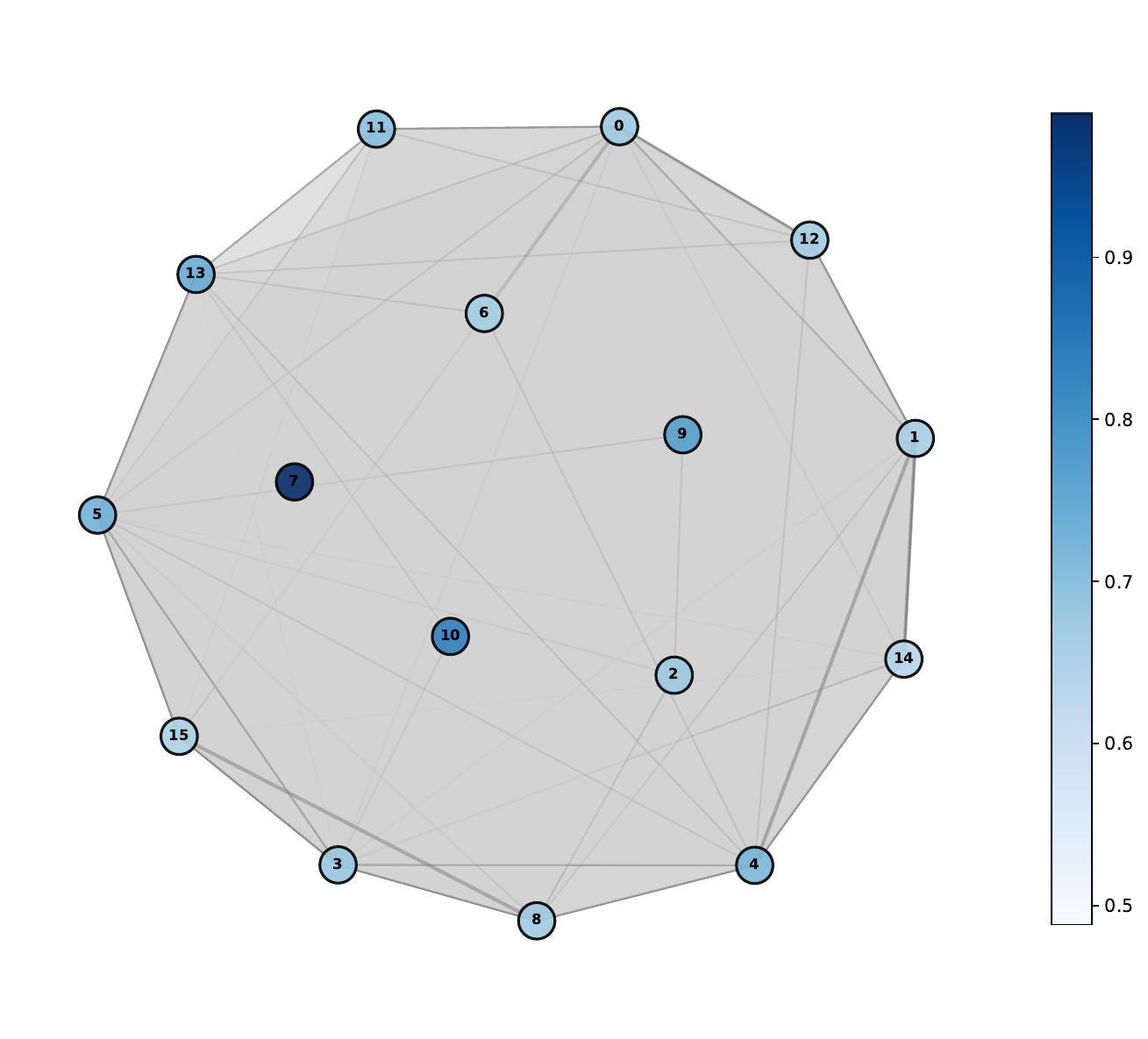} &
\includegraphics[width=0.22\textwidth]{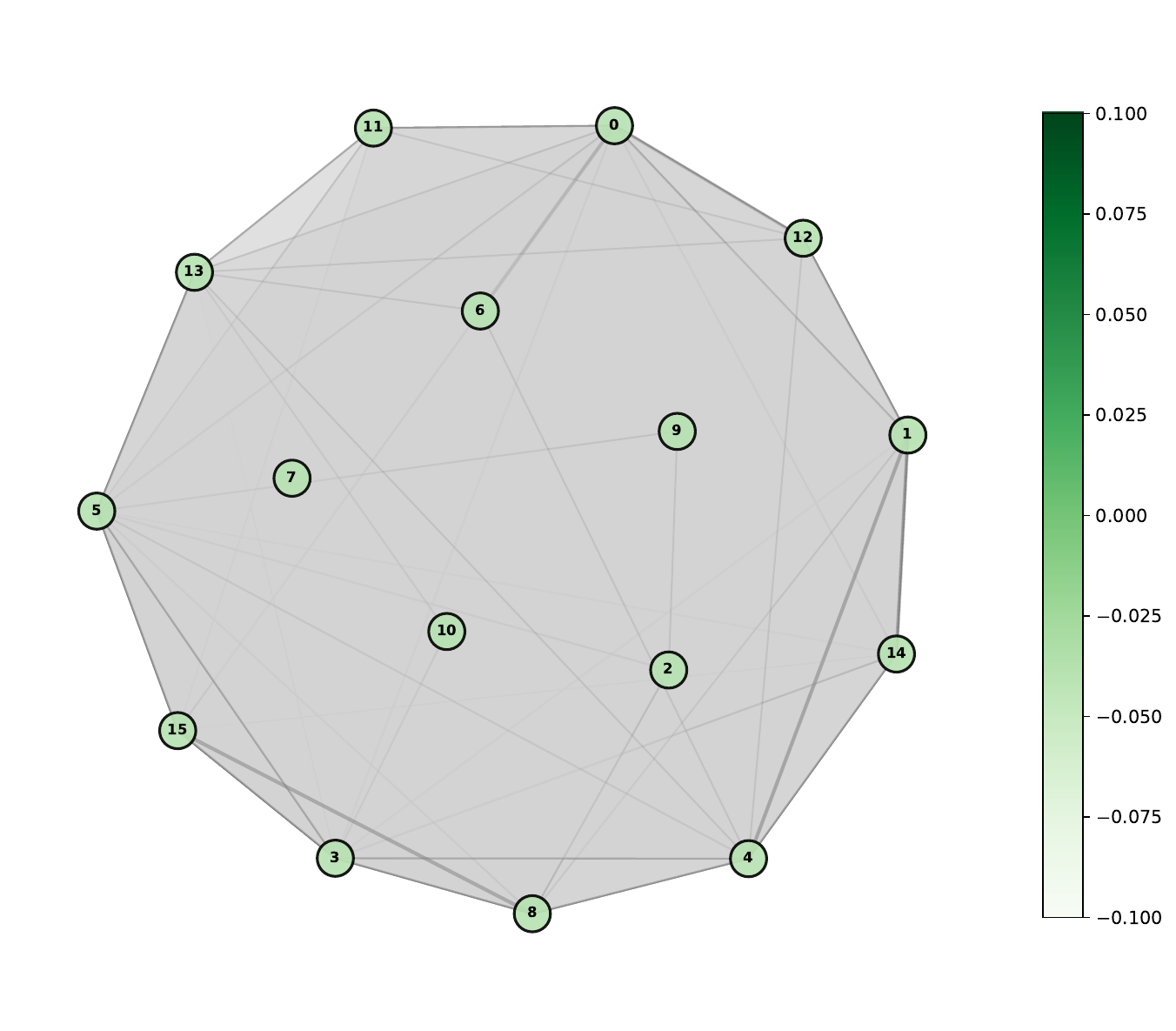} &
\includegraphics[width=0.22\textwidth]{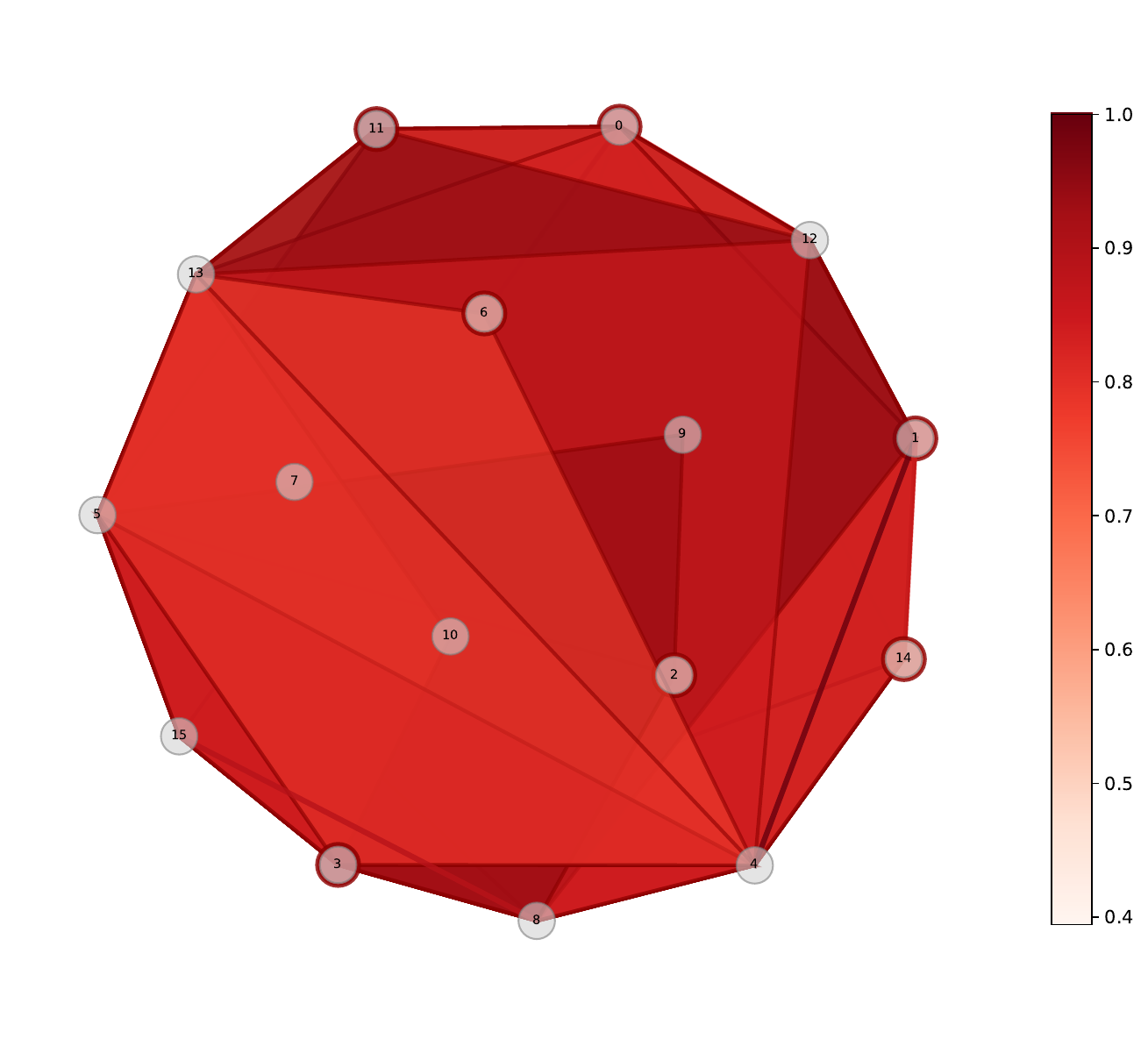} &
\includegraphics[width=0.22\textwidth]{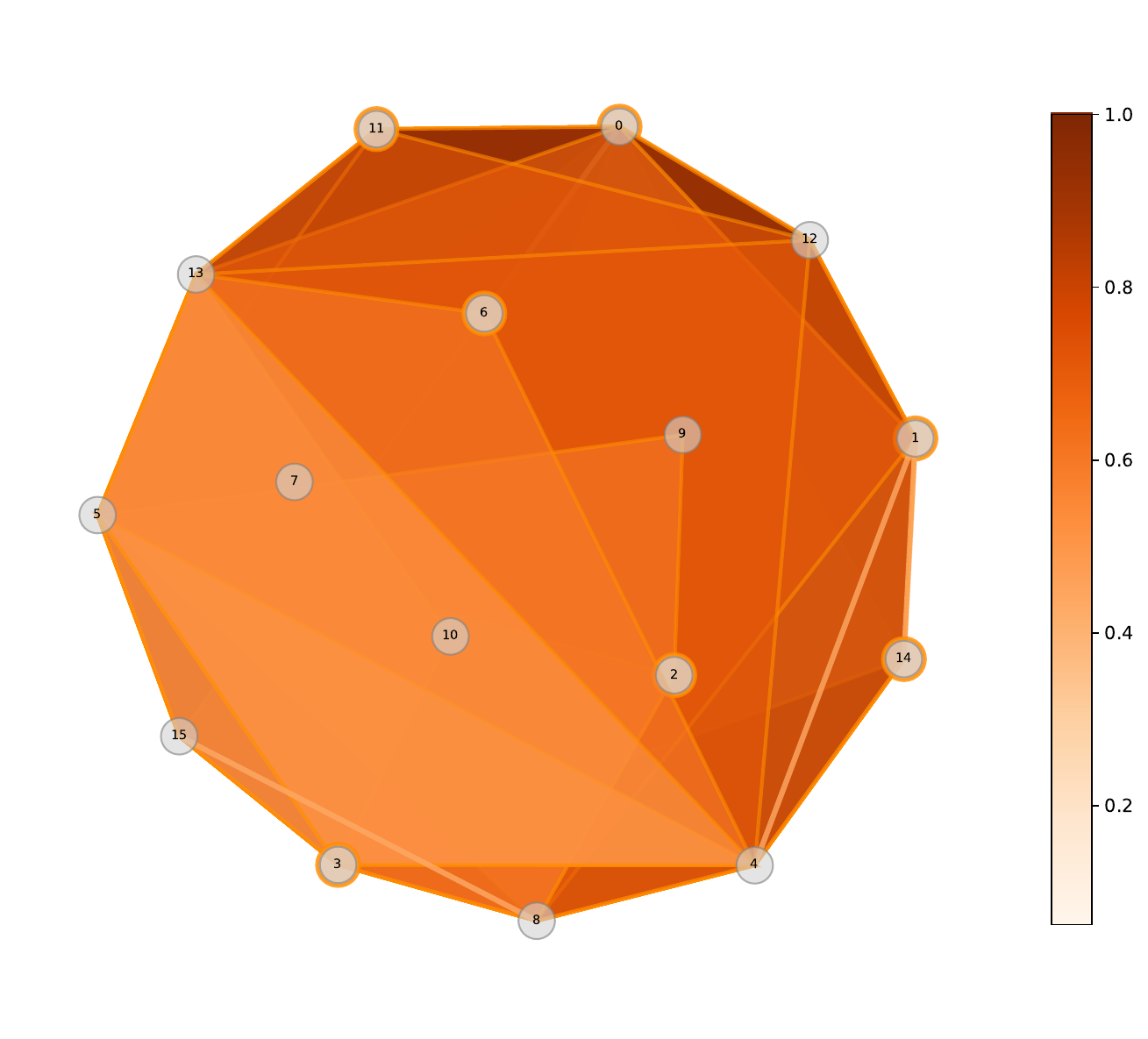} &
\raisebox{0.08\textwidth}{\rotatebox{90}{\textbf{Pollinator}}} \\[0.3cm]
\noalign{\hrule height 0.4pt}

\includegraphics[width=0.22\textwidth]{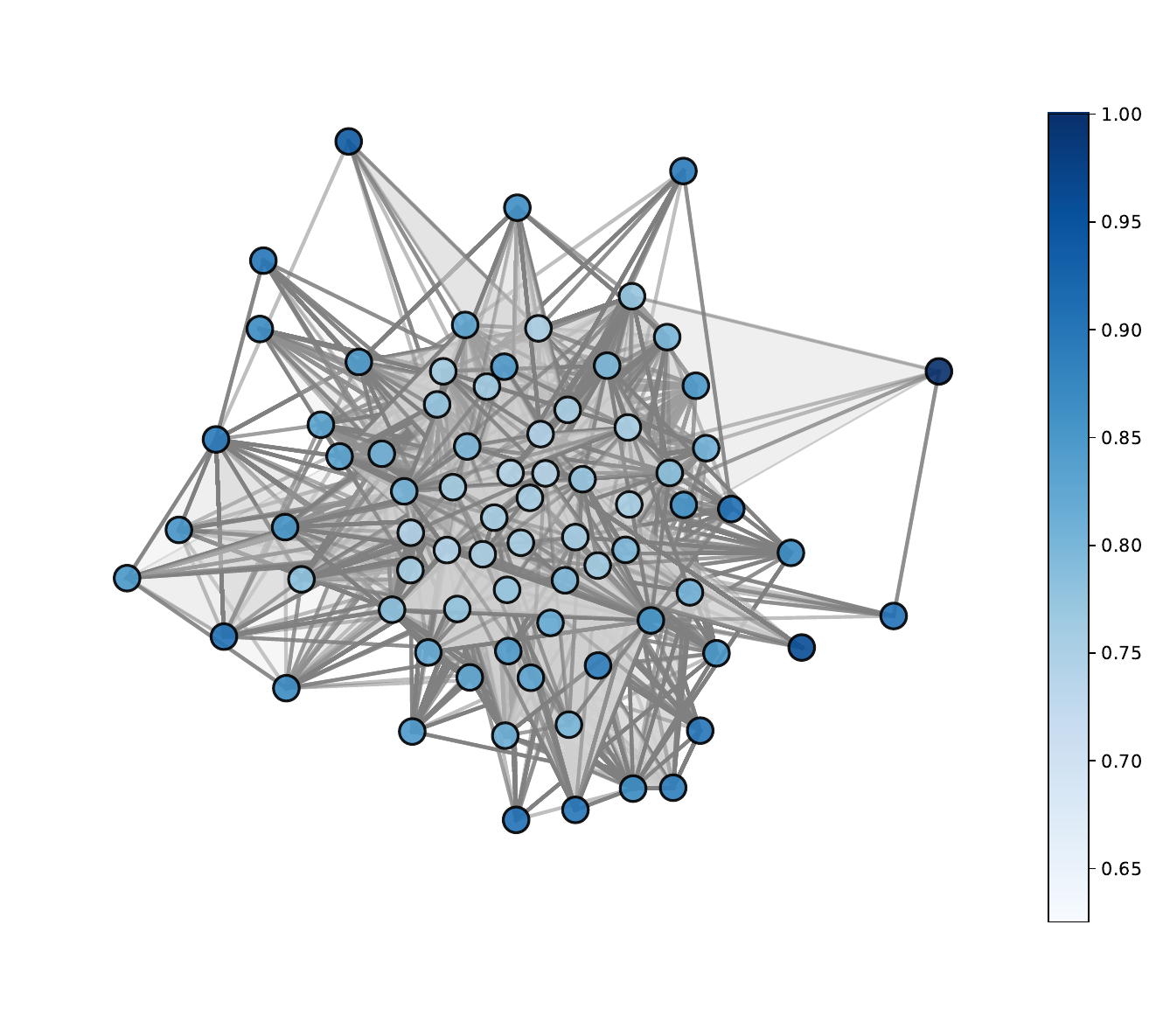} &
\includegraphics[width=0.22\textwidth]{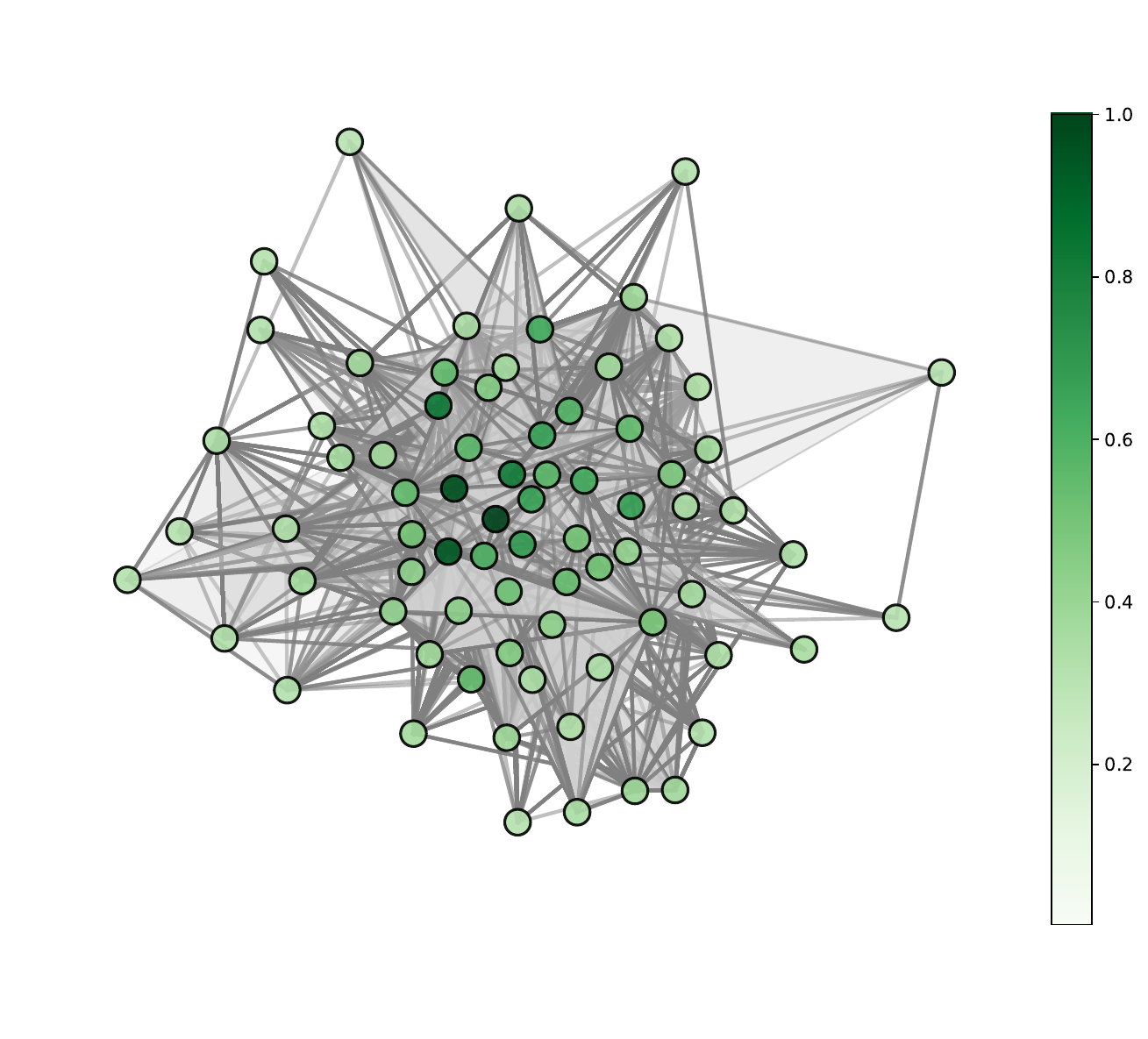} &
\includegraphics[width=0.22\textwidth]{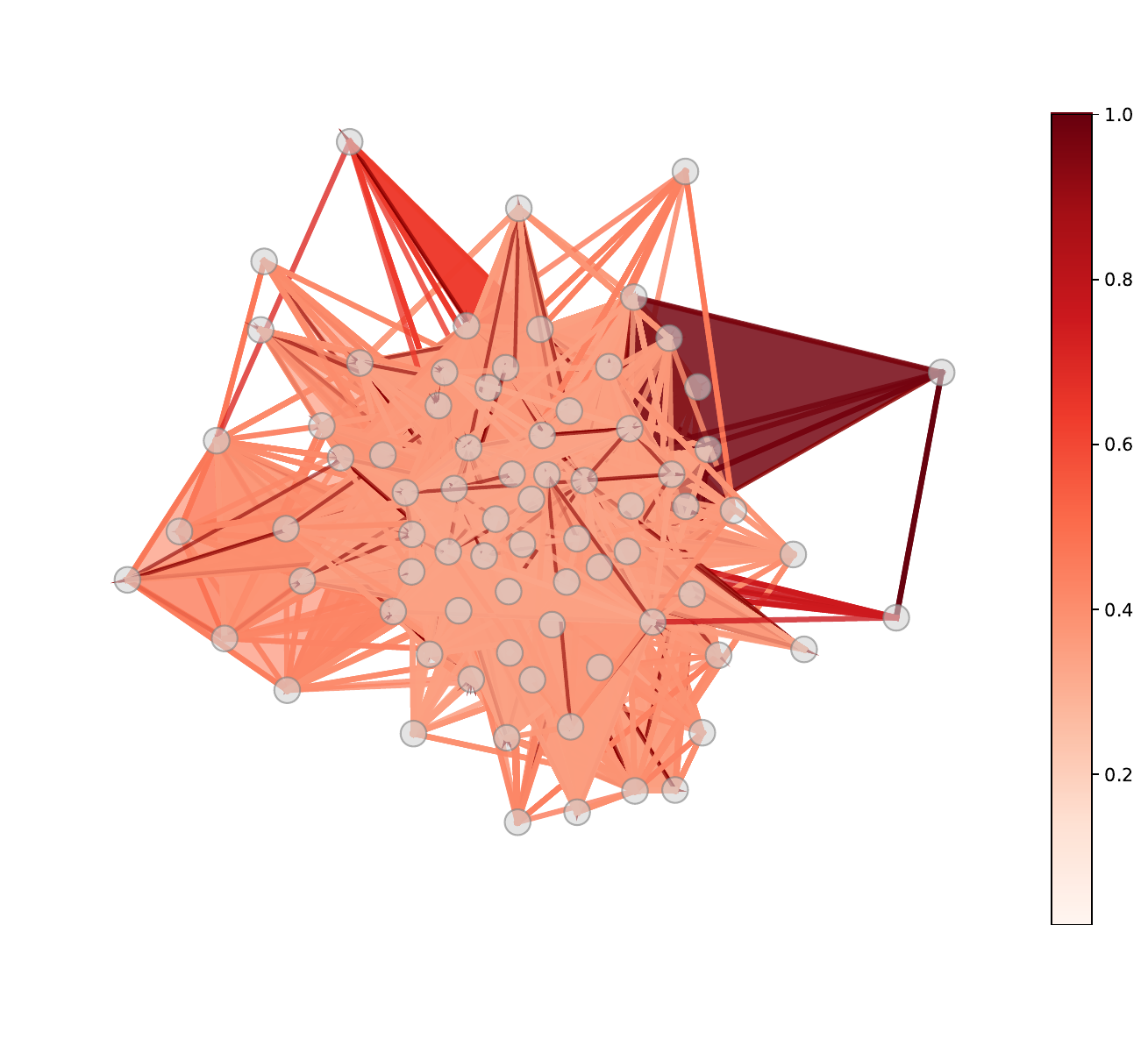} &
\includegraphics[width=0.22\textwidth]{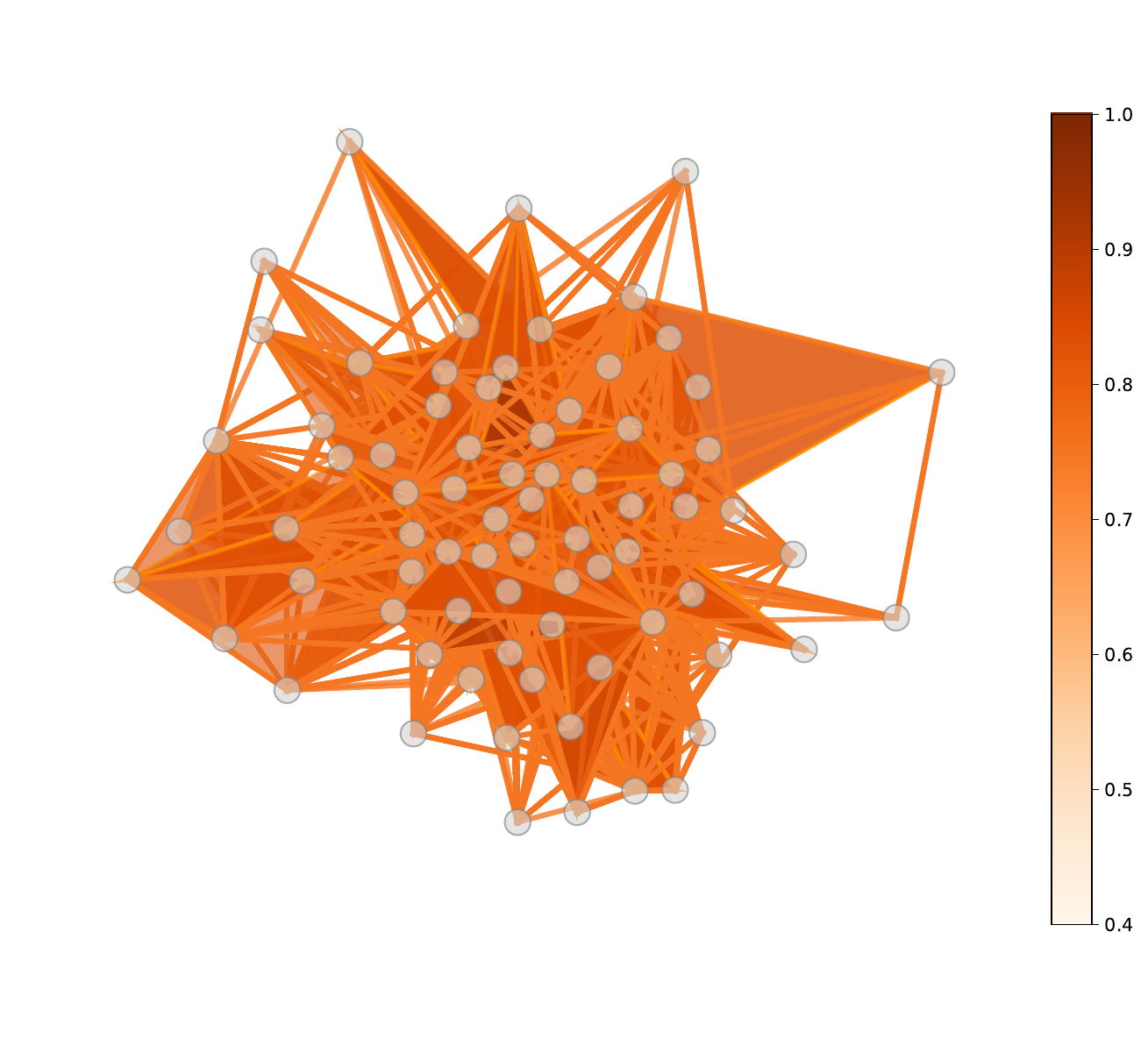} &
\raisebox{0.08\textwidth}{\rotatebox{90}{\textbf{Hospital}}} \\[0.3cm]
\noalign{\hrule height 0.4pt}

\includegraphics[width=0.22\textwidth]{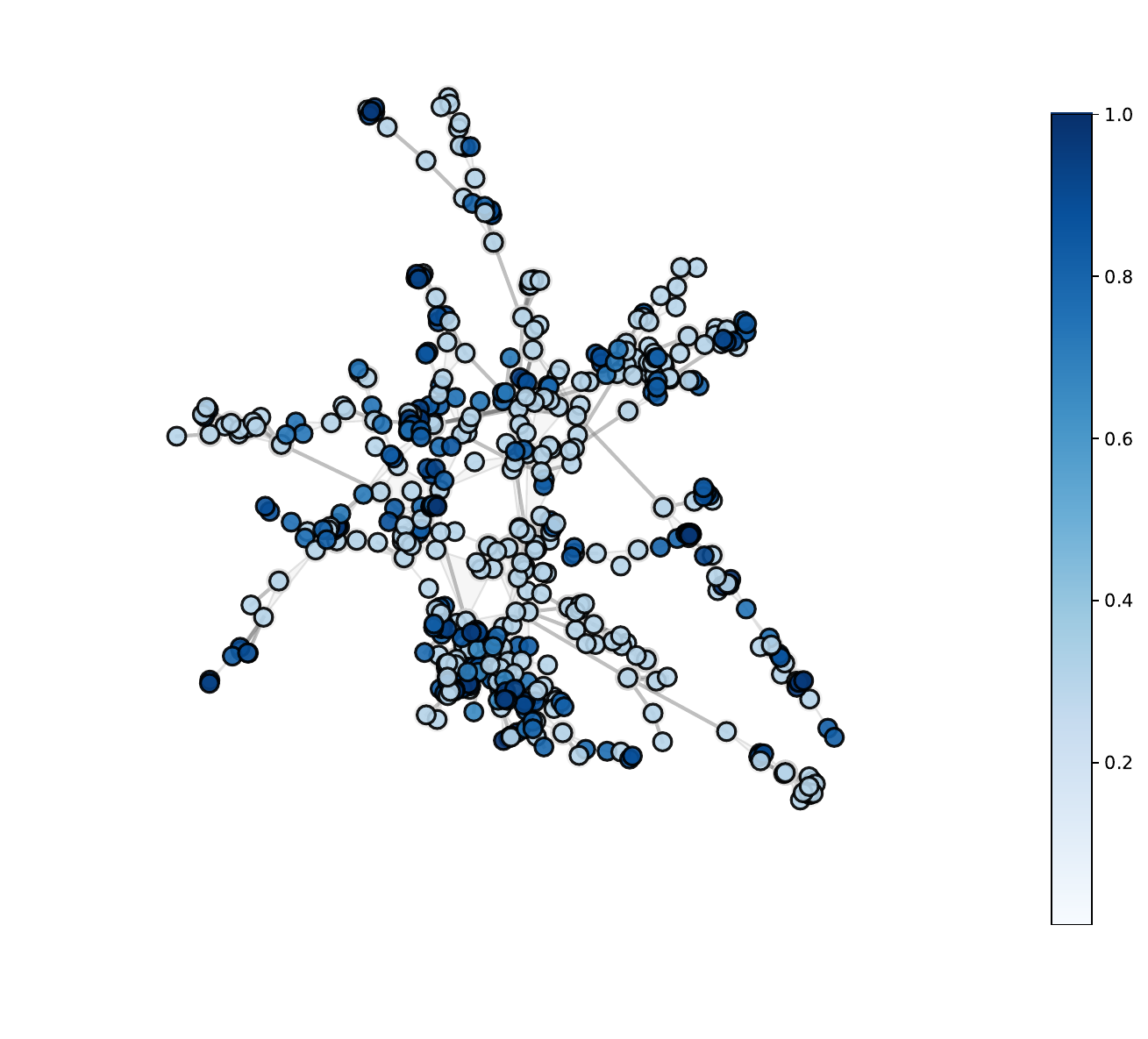} &
\includegraphics[width=0.22\textwidth]{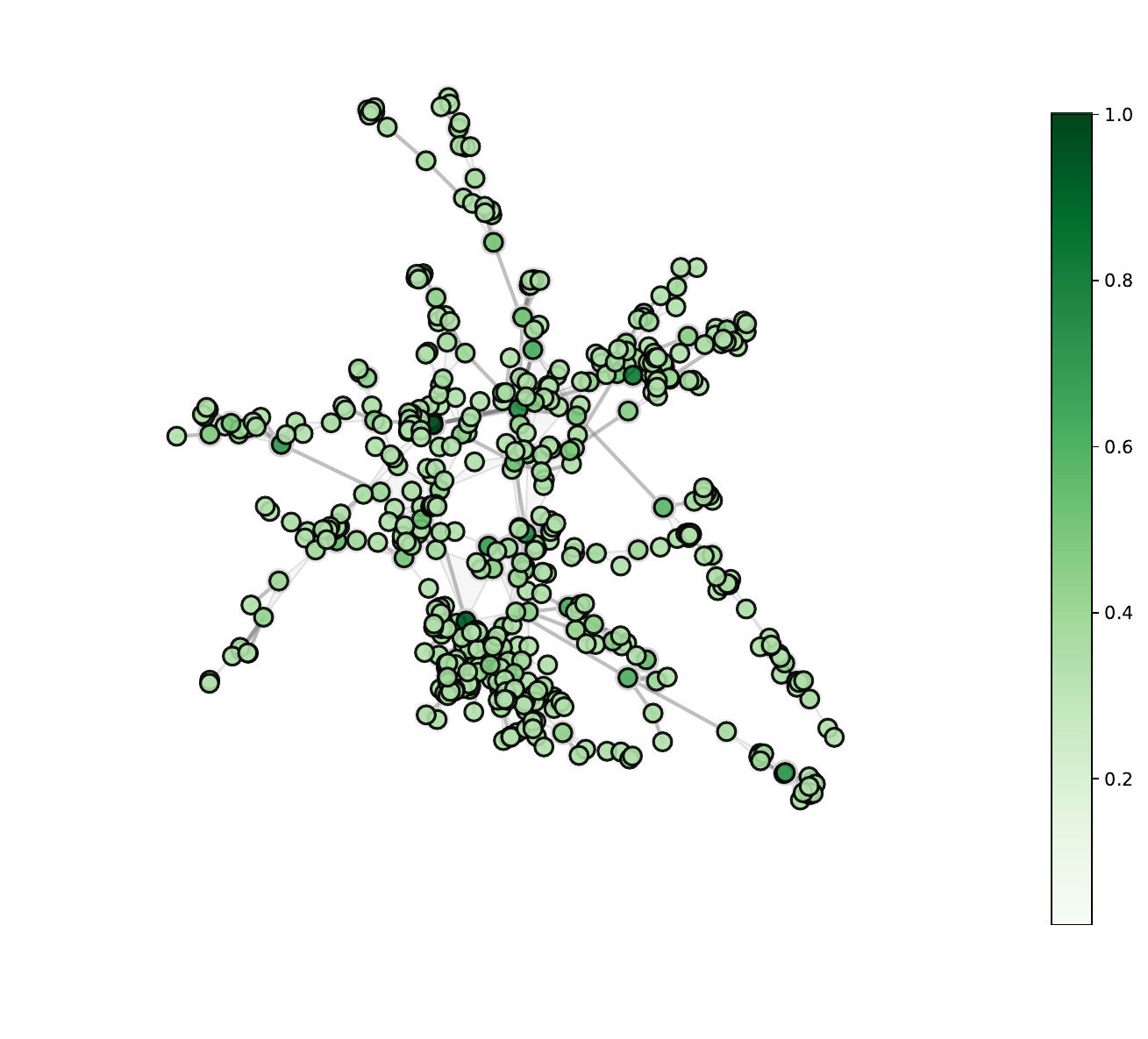} &
\includegraphics[width=0.22\textwidth]{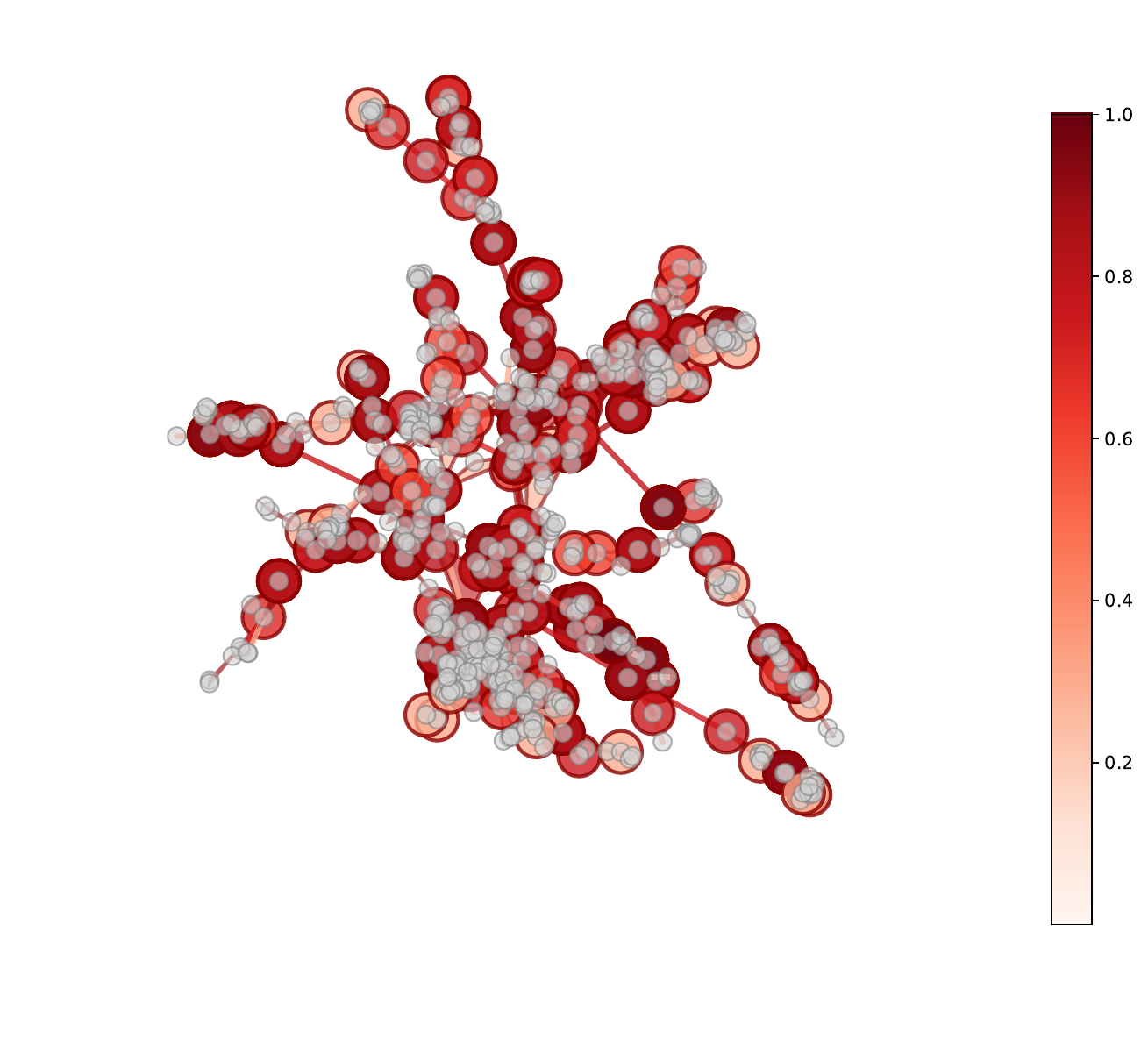} &
\includegraphics[width=0.22\textwidth]{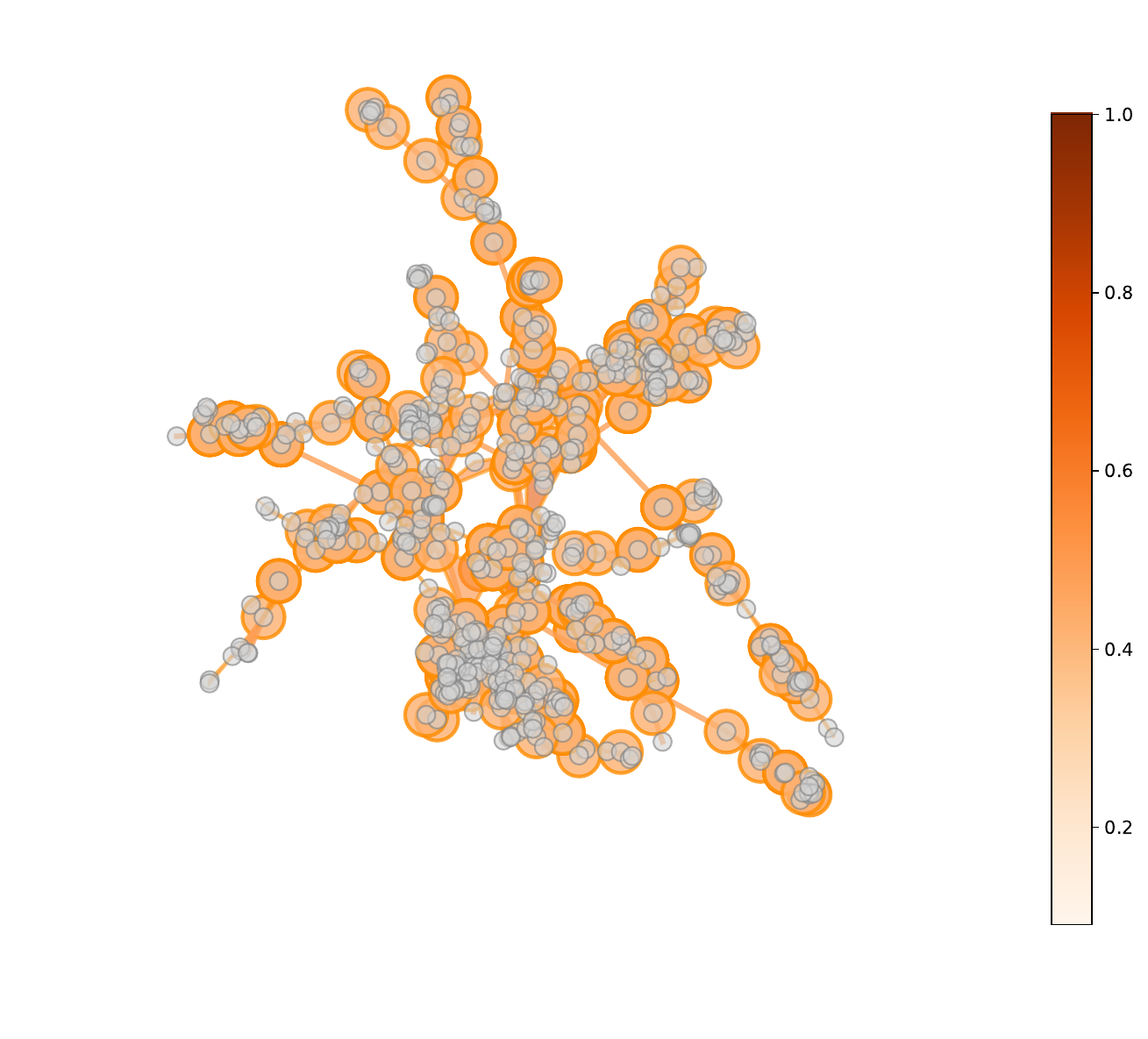} &
\raisebox{0.08\textwidth}{\rotatebox{90}{\textbf{Disease}}} \\[0.3cm]
\noalign{\hrule height 0.4pt}

\includegraphics[width=0.22\textwidth]{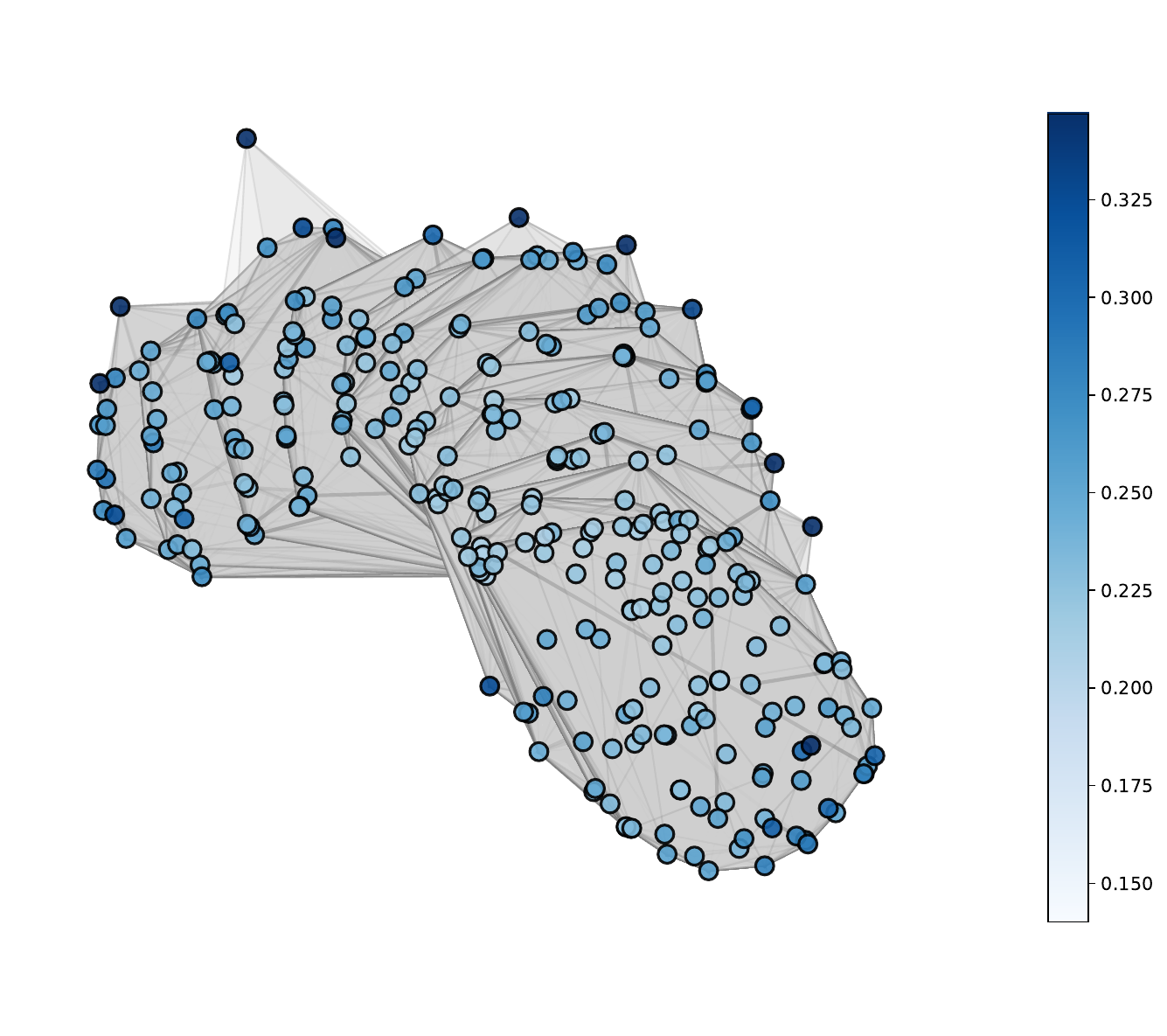} &
\includegraphics[width=0.22\textwidth]{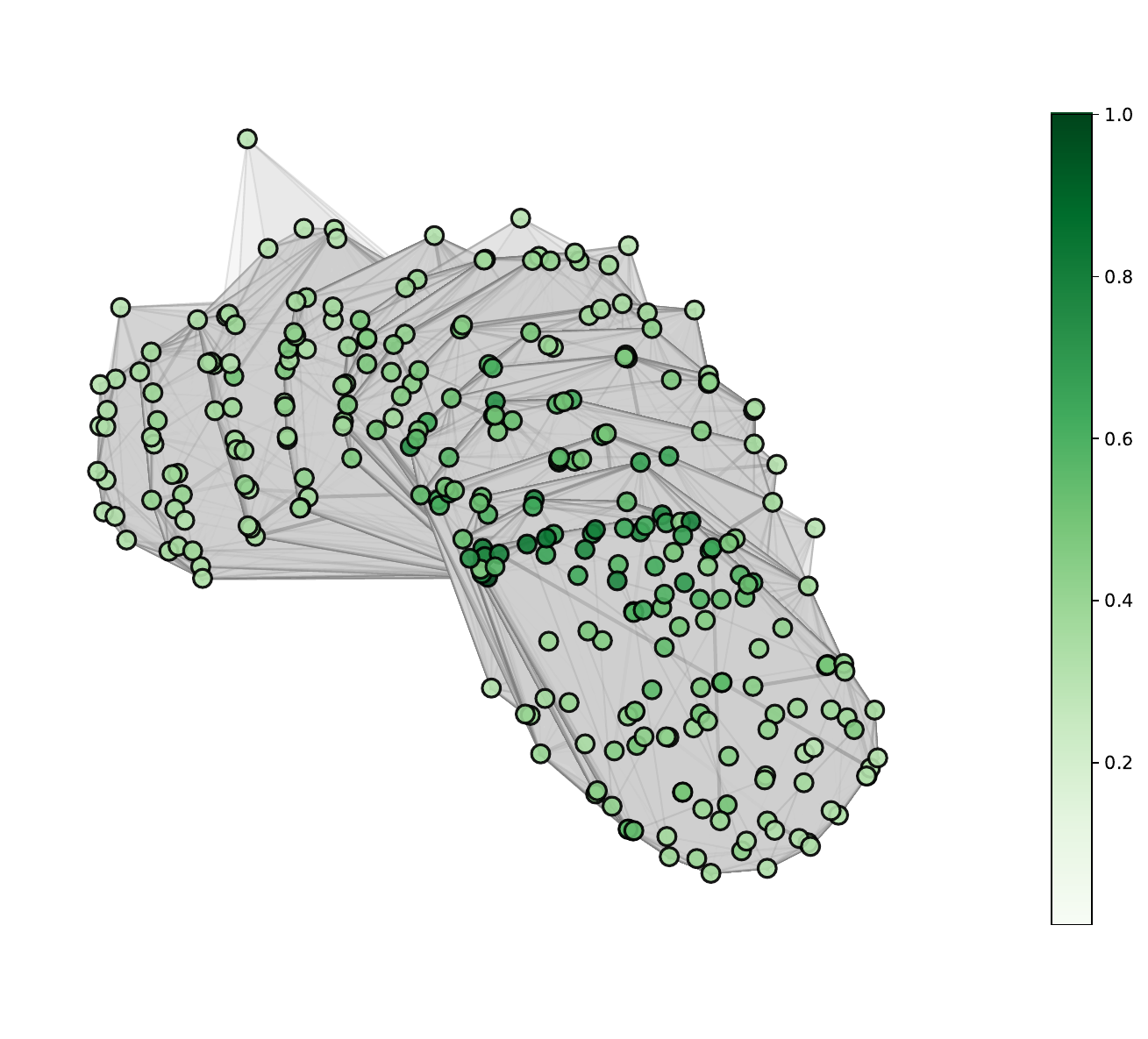} &
\includegraphics[width=0.22\textwidth]{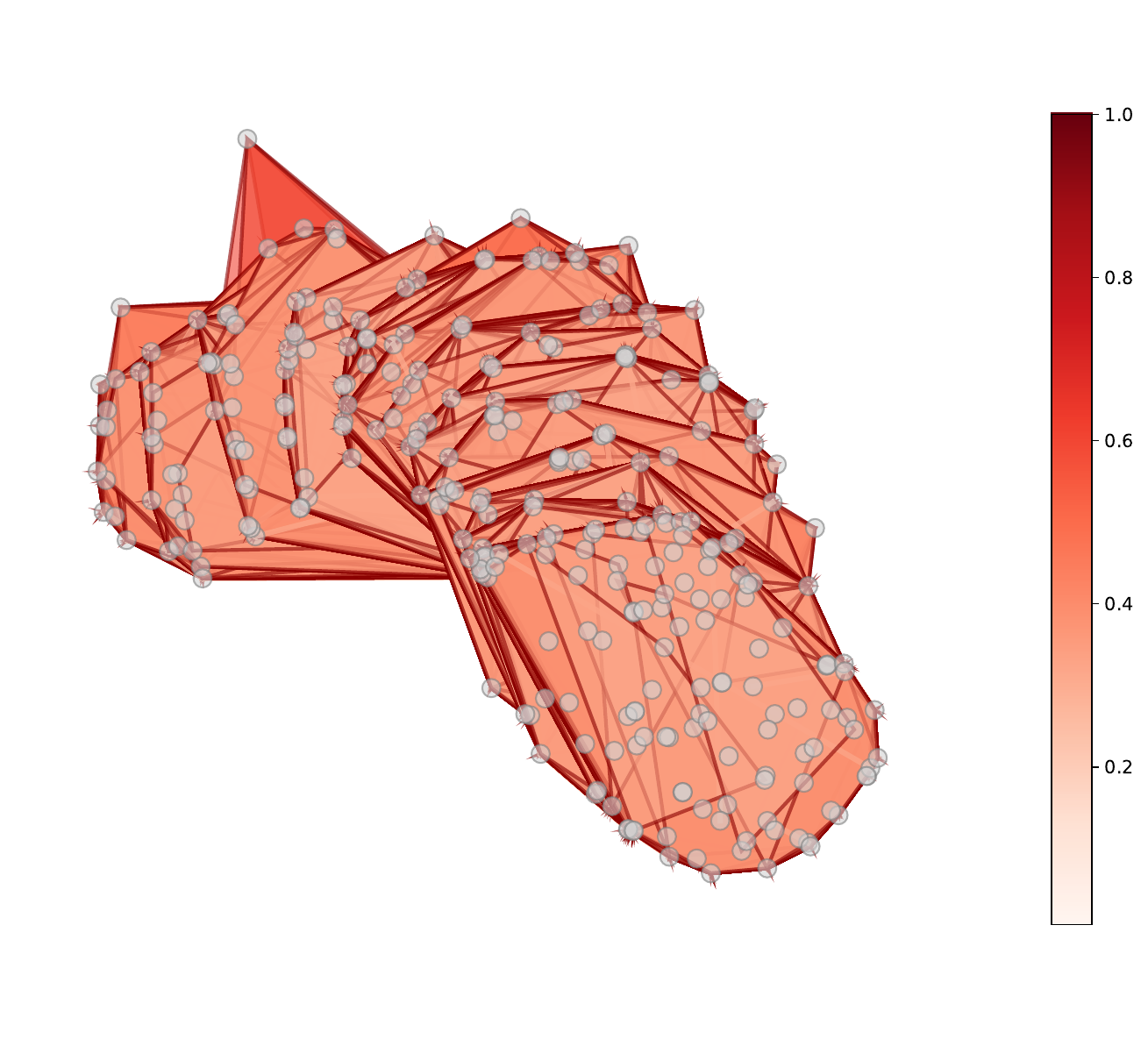} &
\includegraphics[width=0.22\textwidth]{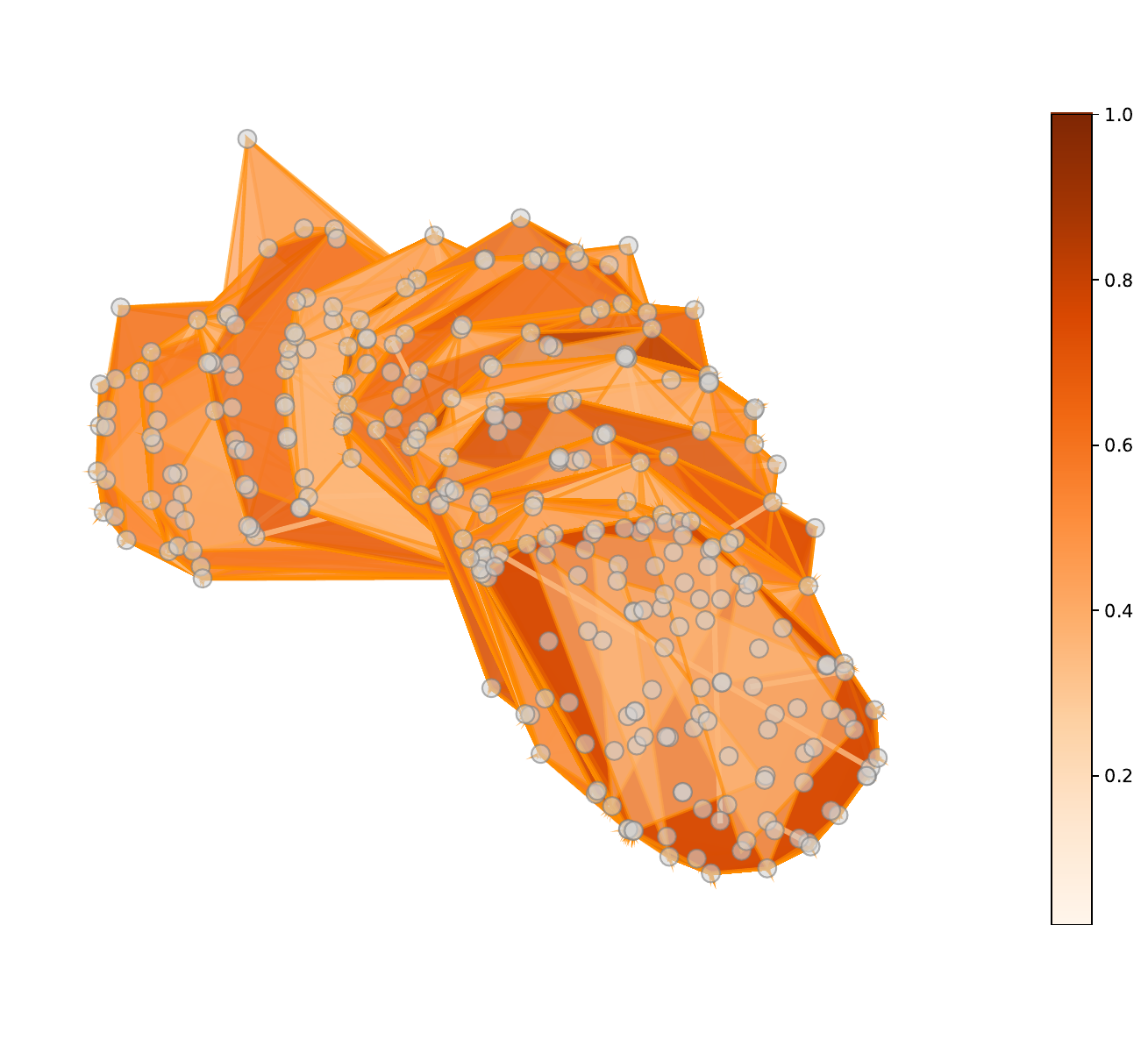} &
\raisebox{0.08\textwidth}{\rotatebox{90}{\textbf{Senate}}} \\
\bottomrule
\end{tabular}

\caption{Comparison of centrality measures across four hypergraph domains. Rows represent different networks (Pollinator, Hospital, Disease, Senate) while columns show combinations of element type (Nodes vs. Hyperedges) and centrality measure (SVD vs. Betweenness). SVD centrality produces dense, discriminative rankings across all datasets and element types, while betweenness yields sparse distributions with many zero values. Beyond density, SVD offers algorithmic determinism, its solution is mathematically unique, whereas betweenness calculations involve implementation choices for tie-breaking and path enumeration.}
\label{fig:hypergraph_comparison}
\end{figure}

The hospital contact network (second row) is dense and spatially clustered. SVD node scores are relatively compressed compared to the pollinator example: many individuals participate in similar low-frequency contact modes, resulting in a narrow—but informative—spread of SVD values that highlights a small set of relatively more central actors (for example, staff with broad contact patterns) while placing the majority near the network mean. SVD hyperedge centrality, however, shows larger variance: particular contact events or temporal aggregates serve as keystone interactions that bind otherwise local groups together and therefore obtain higher spectral weight. By contrast, hypergraph betweenness in this dataset produces a distinct pattern that emphasizes temporally or spatially constrained corridors of interaction; because hospitals feature many redundant proximity ties, betweenness tends to single out specific bridging events rather than produce a graded backbone.

The disease-association \citep{goh2007human} hypergraph (third row) is larger and more heterogeneous. SVD node centrality here reveals clear hierarchical structure: a modest core of diseases (often those sharing wide-ranging genetic or pathway overlap) project strongly onto low-frequency modes and receive elevated SVD scores, while peripheral diseases with narrow genetic overlap remain lower ranked. SVD hyperedge centrality similarly discriminates multi-disease association groups by their role in connecting the disease space. Betweenness shows complementary emphasis: it can highlight particular hyperpaths that traverse dense shared-association regions, but its spatial patterning differs from the SVD backbone, indicating that the two measures capture distinct structural notions (collective spectral participation vs.\ path-centric mediation).

The senate co-sponsorship hypergraph (fourth row) is the largest and most densely overlapping system we consider. SVD node values are relatively compressed across legislators, reflecting a political environment in which most actors participate in numerous co-sponsorship groups; nevertheless, subtle gradations in SVD score identify legislators who systematically participate in bills that span otherwise disparate co-sponsorship communities (i.e., cross-cutting legislators). SVD hyperedge centrality exhibits greater dispersion than the node field, enabling discrimination among bills by their structural role (bipartisan vs.\ highly insular co-sponsorship groups). Betweenness in this domain produces a wider numerical spread in our visualizations, typically accentuating legislators and bills that lie on many minimal co-sponsorship hyperpaths; again, however, the spatial patterning differs from the SVD field and should be read as complementary rather than contradictory.

Two observations unify these case studies. First, SVD incidence centrality provides dense, continuous rankings of both nodes and hyperedges that reveal distinct structural partitioning into hierarchical roles (generalists vs.\ specialists, backbone edges vs.\ peripheral connections) across domains. Because the SVD decomposition is obtained directly from the incidence operator and its Hodge Laplacian, the resulting centralities are deterministic and reproducible for a fixed preprocessing pipeline. Second, betweenness for hypergraphs is inherently sensitive to modelling and normalization choices: different definitions of hyperpaths or rescaling strategies change both numerical ranges and spatial emphases, which explains the domain-dependent patterns observed in the figure panels.

From a methodological perspective, these differences are expected. Spectral centralities measure how nodes and hyperedges project onto collective, low-frequency modes of the hypergraph, giving a principled account of global connectivity and redundancy. Path-based betweenness instead counts participation in minimal hyperpaths and therefore isolates mediation roles that are important for flow under shortest-path assumptions. Both viewpoints have practical value: spectral measures are particularly useful for prioritization tasks that require graded scores across many elements (e.g., conservation ranking, surveillance prioritization, or bill triage), while betweenness is useful when the domain question concerns explicit transit or mediation along minimal routes.

Across all domains, \Cref{fig:hypergraph_comparison} reveals a consistent pattern: SVD centrality produces dense, continuous rankings, while betweenness centrality yields sparse distributions with many zero-valued assignments. This density difference stems from fundamental methodological distinctions with important computational implications.

SVD centrality leverages the global spectral properties of the hypergraph incidence matrix—a well-posed mathematical problem with a unique solution (up to sign). In computational terms, SVD provides algorithmic determinism: given the same matrix, any correct implementation yields identical centrality values. This robustness is crucial for reproducible research and reliable applications.

Conversely, betweenness centrality in hypergraphs depends on shortest hyperpaths—a concept that requires algorithmic choices. Implementation decisions affect results: how are ties between equal-length paths broken? How are hyperpaths enumerated when multiple minimal edge sets connect nodes? Different software packages may make different choices, leading to inconsistent rankings. This implementation sensitivity introduces unwanted variability, especially in hypergraphs where many nodes share identical shortest-path characteristics.

From a computer science perspective, SVD's determinism and density offer practical advantages. Machine learning pipelines require stable, continuous features—sparse, implementation-dependent rankings provide poor input. Similarly, network interventions benefit from granular rankings rather than binary classifications. The spectral approach provides a mathematically robust foundation for higher-order network analysis, avoiding the implementation ambiguities of path-based measures while delivering comprehensive rankings across diverse domains.

\FloatBarrier
\section{Conclusion}

This work presented SVD incidence centrality as a unified spectral framework that addresses centrality analysis in directed networks. While undirected networks achieve theoretical equivalence with current-flow closeness, directed networks reveal the innovation of our approach: preserving directional information through spectral decomposition while maintaining connections to Hodge theory and electrical network principles.

Our key contributions span theory, methodology, and validation. Theoretically, we ground centrality measures in the Hodge Laplacian and its fundamental subspaces, connecting network analysis to established results in algebraic topology and statistical mechanics. The physical interpretation through electrical resistance analogies and energy landscapes provides new insights into how network structure influences dynamical processes.

Methodologically, we demonstrate that orientation preservation through the incidence matrix enables natural hub/authority decomposition without artificial symmetrization. The unified treatment of vertex and edge centralities within a single spectral framework ensures mathematical consistency—a critical advantage for representation learning applications.

Experimental validation across diverse network types reveals the advantages of SVD centrality over traditional betweenness measures, particularly for directed graphs where dense rankings and mathematical consistency become essential. The framework's computational efficiency through SVD algorithms and natural extension to edge centralities make it valuable for graph representation learning and network analysis applications.

The framework naturally extends to hypergraphs through higher-order incidence matrices, enabling analysis of complex systems where relationships involve multiple entities simultaneously. Our comprehensive hypergraph experiments demonstrate how the spectral approach successfully analyzes ecological pollination networks, hospital contact patterns, disease associations, and legislative co-sponsorship structures, providing meaningful centrality measures for both nodes and hyperedges while maintaining the deterministic properties that distinguish it from path-based alternatives.

Future developments include extension to temporal networks and cell, simplicial and combinatorial complexes with higher-order cells, this could help the development of topological deep learning architectures and learnable liftings \citep{franco2025differentiable}.

\section*{Acknowledgments}

This work was supported by CAPES (J. Franco, grant 88887.176396/2025-00); CNPq/Brazil (E. Tokuda, 385292/2025-2; T. Peron, 310248/2023-0; A. Dal Col, 442238/2023-1; F. Petronetto, 405903/2023-5); and FAPESP (E. Tokuda, 2025/22069-0; T. Peron, 2023/07481-6; L. Nonato, 22/09091-8).

\bibliographystyle{plainnat}
\bibliography{references}

@inproceedings{brandes2005centrality,
  title={Centrality measures based on current flow},
  author={Brandes, Ulrik and Fleischer, Daniel},
  booktitle={Annual symposium on theoretical aspects of computer science},
  pages={533--544},
  year={2005},
  organization={Springer}
}

@article{xu2021signless,
  title={Signless-laplacian eigenvector centrality: A novel vital nodes identification method for complex networks},
  author={Xu, Yan and Feng, Zhidan and Qi, Xingqin},
  journal={Pattern Recognition Letters},
  volume={148},
  pages={7--14},
  year={2021},
  publisher={Elsevier}
}

@inproceedings{kucharczuk2022pagerank,
  title={PageRank for Edges: Axiomatic Characterization},
  author={Kucharczuk, Natalia and Was, Tomasz and Skibski, Oskar},
  booktitle={Proceedings of the AAAI Conference on Artificial Intelligence},
  volume={36},
  pages={5108--5115},
  year={2022}
}

@article{zachary1977information,
  title={An information flow model for conflict and fission in small groups},
  author={Zachary, Wayne W},
  journal={Journal of Anthropological Research},
  volume={33},
  number={4},
  pages={452--473},
  year={1977},
  publisher={University of New Mexico},
  doi={10.1086/jar.33.4.3629752}
}

@inproceedings{
franco2025differentiable,
title={Differentiable Lifting for Topological Neural Networks},
author={Jorge Luiz Franco and Gabriel Duarte and Alexander V Nikitin and Moacir A Ponti and Diego Mesquita and Amauri H Souza},
booktitle={Non-Euclidean Foundation Models: Advancing AI Beyond Euclidean Frameworks},
year={2025},
url={https://openreview.net/forum?id=qL6sgVeOaI}
}

@article{franco2026holding,
  author  = {Franco, Jorge L. and Machado Neto, Manoel V. and Verri, Filipe A. N. and Amancio, Diego R.},
  title   = {Graph machine learning for flight delay prediction due to holding manouver},
  journal = {Physica A: Statistical Mechanics and its Applications},
  pages   = {131318},
  year    = {2026},
  issn    = {0378-4371},
  doi     = {10.1016/j.physa.2026.131318},
  url     = {https://www.sciencedirect.com/science/article/pii/S0378437126000543}
}

@article{bockholt2021systematic,
  author  = {Bockholt, Mareike and Zweig, Katharina A.},
  title   = {A systematic evaluation of assumptions in centrality measures by empirical flow data},
  journal = {Social Network Analysis and Mining},
  volume  = {11},
  pages   = {25},
  year    = {2021},
  doi     = {10.1007/s13278-021-00725-3}
}

@article{fanuel2019deformed,
  author    = {Fanuel, Micha{\"e}l and Suykens, Johan A. K.},
  title     = {Deformed Laplacians and Spectral Ranking in Directed Networks},
  journal   = {Applied and Computational Harmonic Analysis},
  volume    = {47},
  number    = {2},
  pages     = {397--422},
  year      = {2019},
  doi       = {10.1016/j.acha.2017.09.002}
}

@article{Gao2014,
  title     = {Target control of complex networks},
  author    = {Gao, Jianxi and Liu, Yang-Yu and D'Souza, Raissa M. and Barab{\'a}si, Albert-L{\'a}szl{\'o}},
  journal   = {Nature Communications},
  volume    = {5},
  number    = {1},
  pages     = {5415},
  year      = {2014},
  publisher = {Nature Publishing Group},
  doi       = {10.1038/ncomms6415},
  url       = {https://www.nature.com/articles/ncomms6415}
}

@article{schaub2020random,
  author    = {Schaub, Michael T. and Benson, Austin R. and Horn, Paul and Lippner, Gabor and Jadbabaie, Ali},
  title     = {Random Walks on Simplicial Complexes and the Normalized Hodge 1-Laplacian},
  journal   = {SIAM Review},
  volume    = {62},
  number    = {2},
  pages     = {353--391},
  year      = {2020},
  doi       = {10.1137/18M1201019}
}

@article{newman2004finding,
  title={Finding and evaluating community structure in networks},
  author = {Newman, M E J and Girvan, M},
  journal={Physical review E},
  volume={69},
  number={2},
  pages={026113},
  year={2004},
  publisher={APS}
}

@article{vasilyeva2024matrix,
  title={Matrix centrality for annotated hypergraphs},
  author={Vasilyeva, E and Samoylenko, I and Kovalenko, K and Musatov, D and Raigorodskii, AM and Boccaletti, S},
  journal={Chaos, Solitons \& Fractals},
  volume={186},
  pages={115256},
  year={2024},
  publisher={Elsevier}
}

@article{snijders2010introduction,
  title={Introduction to stochastic actor-based models for network dynamics},
  author={Snijders, Tom A B and van de Bunt, Gerhard G and Steglich, Christian E G},
  journal={Social Networks},
  volume={32},
  number={1},
  pages={44--60},
  year={2010},
  publisher={Elsevier},
  doi={10.1016/j.socnet.2009.02.004}
}

@article{friedman1998computing,
  title={Computing Betti numbers via combinatorial Laplacians},
  author={Friedman, Joel},
  journal={Algorithmica},
  volume={21},
  number={4},
  pages={331--346},
  year={1998}
}

@misc{openflights2024,
  title={OpenFlights: Flight logging, mapping, stats and sharing},
  author={{OpenFlights}},
  year={2024},
  url={https://openflights.org/data.html},
  note={Accessed 2024}
}

@article{subelj2011robust,
  title={Robust network community detection using balanced propagation},
  author={{\v{S}}ubelj, Lovro and Bajec, Marko},
  journal={The European Physical Journal B},
  volume={81},
  number={3},
  pages={353--362},
  year={2011},
  publisher={Springer},
  doi={10.1140/epjb/e2011-10979-2}
}

@article{coulomb2005gene,
  title={Gene essentiality and the topology of protein interaction networks},
  author={Coulomb, S and Bauer, M and Bernard, D and Marsolier-Kergoat, M-C},
  journal={Proceedings of the Royal Society B: Biological Sciences},
  volume={272},
  number={1573},
  pages={1721--1725},
  year={2005},
  publisher={The Royal Society},
  doi={10.1098/rspb.2005.3128}
}

@article{duch2005community,
  title={Community identification using Extremal Optimization},
  author={Duch, Jordi and Arenas, Alex},
  journal={Physical Review E},
  volume={72},
  number={2},
  pages={027104},
  year={2005},
  publisher={APS},
  doi={10.1103/physreve.72.027104}
}

@article{contreras2024beyond,
  title={Beyond directed hypergraphs: heterogeneous hypergraphs and spectral centralities},
  author={Contreras-Aso, Gonzalo and Criado, Regino and Romance, Miguel},
  journal={Journal of Complex Networks},
  volume={12},
  number={4},
  pages={cnae037},
  year={2024},
  publisher={Oxford University Press}
}

@article{brohl2022straightforward,
  title={A straightforward edge centrality concept derived from generalizing degree and strength},
  author={Br{\"o}hl, Timo and Lehnertz, Klaus},
  journal={Scientific Reports},
  volume={12},
  number={1},
  pages={4407},
  year={2022},
  publisher={Nature Publishing Group UK London}
}

@article{dal2023graph,
  title={Graph regularization centrality},
  author={Dal Col, Alcebiades and Petronetto, Fabiano},
  journal={Physica A: Statistical Mechanics and its Applications},
  volume={628},
  pages={129188},
  year={2023},
  publisher={Elsevier}
}

@article{eigenHypergraph,
author = {Benson, Austin R.},
title = {Three Hypergraph Eigenvector Centralities},
journal = {SIAM Journal on Mathematics of Data Science},
volume = {1},
number = {2},
pages = {293-312},
year = {2019},
doi = {10.1137/18M1203031},

URL = {

        https://doi.org/10.1137/18M1203031



},
eprint = {

        https://doi.org/10.1137/18M1203031



}
}

@mastersthesis{Bek2006PollinationNetwork,
  author  = {Bek, S.},
  title   = {A Pollination Network from a Danish Forest Meadow},
  school  = {University of Aarhus},
  address = {Aarhus, Denmark},
  year    = {2006}
}

@article{vanhems2013estimating,
  title={Estimating potential infection transmission routes in hospital wards using wearable proximity sensors},
  author={Vanhems, Philippe and Barrat, Alain and Cattuto, Ciro and Pinton, Jean-Fran{\c{c}}ois and Khanafer, Nagham and R{\'e}gis, Corinne and Kim, Byeul-a and Comte, Brigitte and Voirin, Nicolas},
  journal={PloS one},
  volume={8},
  number={9},
  pages={e73970},
  year={2013},
  publisher={Public Library of Science San Francisco, USA}
}

@article{goh2007human,
author = {Kwang-Il Goh  and Michael E. Cusick  and David Valle  and Barton Childs  and Marc Vidal  and Albert-László Barabási },
title = {The human disease network},
journal = {Proceedings of the National Academy of Sciences},
volume = {104},
number = {21},
pages = {8685-8690},
year = {2007},
doi = {10.1073/pnas.0701361104},
URL = {https://www.pnas.org/doi/abs/10.1073/pnas.0701361104},
eprint = {https://www.pnas.org/doi/pdf/10.1073/pnas.0701361104}
}

@misc{landry2024senate-bills,
  author       = {Landry, N.},
  title        = {{senate-bills} (v0.0) [Data set]},
  year         = {2024},
  publisher    = {Zenodo},
  doi          = {10.5281/zenodo.10957697},
  url          = {https://doi.org/10.5281/zenodo.10957697},
  note         = {Version 0.0}
}

@inproceedings{hyperbetw,
author = {Lee, Kwang Hee and Kim, Myoung Ho},
title = {Computing Betweenness Centrality in B-hypergraphs},
year = {2017},
isbn = {9781450349185},
publisher = {Association for Computing Machinery},
address = {New York, NY, USA},
url = {https://doi.org/10.1145/3132847.3133093},
doi = {10.1145/3132847.3133093},
booktitle = {Proceedings of the 2017 ACM on Conference on Information and Knowledge Management},
pages = {2147–2150},
numpages = {4},
location = {Singapore, Singapore},
series = {CIKM '17}
}

@book{biggs1993algebraic,
  title={Algebraic Graph Theory},
  author={Biggs, Norman},
  publisher={Cambridge University Press},
  year={1993}
}

@article{lim2019hodgelaplaciansgraphs,
  title={Hodge laplacians on graphs},
  author={Lim, Lek-Heng},
  journal={SIAM Review},
  volume={62},
  number={4},
  pages={685--715},
  year={2020},
  publisher={SIAM}
}

@misc{alain2024graphclassificationgaussianprocesses,
      title={Graph Classification Gaussian Processes via Hodgelet Spectral Features},
      author={Mathieu Alain and So Takao and Xiaowen Dong and Bastian Rieck and Emmanuel Noutahi},
      year={2024},
      eprint={2410.10546},
      archivePrefix={arXiv},
      primaryClass={cs.LG},
      url={https://arxiv.org/abs/2410.10546},
}

@article{brin1998pagerank,
  title={The PageRank citation ranking: bringing order to the web},
  author={Brin, Sergey},
  journal={Proceedings of ASIS, 1998},
  volume={98},
  pages={161--172},
  year={1998}
}

@article{kleinberg1999authoritative,
  title={Authoritative sources in a hyperlinked environment},
  author={Kleinberg, Jon M},
  journal={Journal of the ACM (JACM)},
  volume={46},
  number={5},
  pages={604--632},
  year={1999},
  publisher={ACM New York, NY, USA}
}

@article{tudisco2021hyperedge,
  title={Node and edge nonlinear eigenvector centrality for hypergraphs},
  author={Tudisco, Francesco and Higham, Desmond J},
  journal={Communications Physics},
  volume={4},
  number={1},
  pages={201},
  year={2021},
  publisher={Nature Publishing Group UK London}
}

@Inbook{Lu2013edgebetw,
author="Lu, LongJason
and Zhang, Minlu",
title="Edge Betweenness Centrality",
bookTitle="Encyclopedia of Systems Biology",
year="2013",
publisher="Springer New York",
address="New York, NY",
pages="647--648",
isbn="978-1-4419-9863-7",
doi="10.1007/978-1-4419-9863-7_874",
url="https://doi.org/10.1007/978-1-4419-9863-7_874"
}

@article{bonacich1987power,
  title={Power and centrality: A family of measures},
  author={Bonacich, Phillip},
  journal={American journal of sociology},
  volume={92},
  number={5},
  pages={1170--1182},
  year={1987},
  publisher={University of Chicago Press}
}

@article{freeman1978centrality,
title = {Centrality in social networks conceptual clarification},
journal = {Social Networks},
volume = {1},
number = {3},
pages = {215-239},
year = {1978},
issn = {0378-8733},
doi = {https://doi.org/10.1016/0378-8733(78)90021-7},
url = {https://www.sciencedirect.com/science/article/pii/0378873378900217},
author = {Linton C. Freeman}
}

@article{liu2023eigenvector,
  title={Eigenvector centrality in simplicial complexes of hypergraphs},
  author={Liu, Xiaolu and Zhao, Chong},
  journal={Chaos: An Interdisciplinary Journal of Nonlinear Science},
  volume={33},
  number={9},
  year={2023},
  publisher={AIP Publishing}
}

@article{baglama2014analysis,
  title={Analysis of directed networks via partial singular value decomposition and Gauss quadrature},
  author={Baglama, J and Fenu, Caterina and Reichel, L and Rodriguez, Giuseppe},
  journal={Linear Algebra and its Applications},
  volume={456},
  pages={93--121},
  year={2014},
  publisher={Elsevier}
}

@article{benzi2013ranking,
  title={Ranking hubs and authorities using matrix functions},
  author={Benzi, Michele and Estrada, Ernesto and Klymko, Christine},
  journal={Linear Algebra and its Applications},
  volume={438},
  number={5},
  pages={2447--2474},
  year={2013},
  publisher={Elsevier}
}

@article{sarkar2011community,
  title={Community detection in graphs using singular value decomposition},
  author={Sarkar, Somwrita and Dong, Andy},
  journal={Physical Review E—Statistical, Nonlinear, and Soft Matter Physics},
  volume={83},
  number={4},
  pages={046114},
  year={2011},
  publisher={APS}
}

@article{klein1993resistance,
  title={Resistance distance},
  author={Klein, Douglas J and Randi{\'c}, Milan},
  journal={Journal of mathematical chemistry},
  volume={12},
  number={1},
  pages={81--95},
  year={1993},
  publisher={Springer}
}

@book{chung1997spectral,
  title={Spectral graph theory},
  author={Chung, Fan RK},
  volume={92},
  year={1997},
  publisher={American Mathematical Society}
}

@article{horak2013spectra,
  title={Spectra of combinatorial Laplace operators on simplicial complexes},
  author={Horak, Danijela and Jost, J{\"u}rgen},
  journal={Advances in Mathematics},
  volume={244},
  pages={303--336},
  year={2013},
  publisher={Elsevier}
}

@article{eckmann1945harmonische,
  title={{\"U}ber die harmonischen Beziehungen zwischen den Randoperatoren und den Hauptoperatoren der Hodge-de-Rham-Theorie},
  author={Eckmann, Beno},
  journal={Commentarii Mathematici Helvetici},
  volume={19},
  pages={271--288},
  year={1945},
  publisher={Springer}
}

\appendix

\section{Proof of Effective Resistance Identity}
\label{app:proof}

\begin{theorem}[Effective Resistance and SVD Centrality]
\label{thm:effective_resistance}
Let $G$ be a connected undirected graph with $n$ vertices and incidence matrix $B$ having compact SVD $B = U\Sigma V^\top$. For any vertex $i$, the sum of effective resistances from $i$ to all other vertices satisfies
\[
\sum_{j=1}^n R_{ij} = n C_v(i) + \operatorname{tr}(L_0^+),
\]
where $C_v(i) = [L_0^+]_{ii}$ is the SVD vertex centrality and $L_0^+ = U\Sigma^{-2}U^\top$ is the pseudoinverse of the vertex Hodge Laplacian.
\end{theorem}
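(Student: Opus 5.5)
We expand each effective resistance via \eqref{eq:effective_resistance}, sum over $j$, and evaluate the two resulting global sums, $\sum_j [L_0^+]_{jj}$ and $\sum_j [L_0^+]_{ij}$, using the spectral form $L_0^+=U\Sigma^{-2}U^\top$.

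Summing \eqref{eq:effective_resistance} over $j=1,\dots,n$ (the $j=i$ term is zero, so including it is harmless) gives
\[
\sum_{j=1}^n R_{ij} \;=\; n[L_0^+]_{ii} \;+\; \sum_{j=1}^n [L_0^+]_{jj} \;-\; 2\sum_{j=1}^n [L_0^+]_{ij}.
\]
The first term is $nC_v(i)$ by \eqref{eq:centrality_definitions}. The second term is $\operatorname{tr}(L_0^+)$ by definition. In the SVD form it equals $\sum_{k=1}^r \sigma_k^{-2}\|u_k\|_2^2=\sum_{k=1}^r\sigma_k^{-2}$, because the columns of the compact $U$ are orthonormal. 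This matches the expansion quoted in the main text.

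It remains to show that the cross term vanishes, i.e.\ $\sum_j [L_0^+]_{ij} = e_i^\top L_0^+\mathbf 1 = 0$. We argue as follows. Every column of $B$ has one $+1$ and one $-1$, so $\mathbf 1^\top B=0$, i.e.\ $\mathbf 1\in\ker(B^\top)$. Each column $u_k$ with $\sigma_k>0$ satisfies $u_k=\sigma_k^{-1}Bv_k\in\operatorname{Im}(B)$. By the orthogonal splitting $\mathbb R^n=\operatorname{Im}(B)\oplus\ker(B^\top)$ established in the Hodge-decomposition subsection, it follows that $u_k^\top\mathbf 1=0$. Hence $U^\top\mathbf 1=0$ and therefore $L_0^+\mathbf 1=U\Sigma^{-2}U^\top\mathbf 1=0$. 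Substituting this back yields $\sum_j R_{ij}=nC_v(i)+\operatorname{tr}(L_0^+)$.

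The main obstacle is this vanishing cross term; it is the only place where the graph structure enters, through the zero column sums of $B$. Connectivity is not strictly needed for the identity $L_0^+\mathbf 1=0$, but it guarantees that every $R_{ij}$ is finite and equals the quadratic form in \eqref{eq:effective_resistance}, so we invoke it there. We would also remark that the identity is insensitive to the choice of orthonormal basis within degenerate singular subspaces, since only $L_0^+$ enters the computation.
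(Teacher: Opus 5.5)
Your proof is correct and is essentially the paper's argument organized at the matrix level. The paper expands $R_{ij}=\sum_k (u_{k,i}-u_{k,j})^2/\sigma_k^2$ mode by mode and uses $\sum_j u_{k,j}^2=1$ and $\sum_j u_{k,j}=0$, which are exactly your trace identity and your vanishing cross term $U^\top\mathbf 1=0$. Your justification of $U^\top\mathbf 1=0$, via $\mathbf 1^\top B=0$ and $u_k=\sigma_k^{-1}Bv_k$, is slightly more explicit than the paper's appeal to $\mathbf 1\in\ker(L_0)$.
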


\begin{proof}
From the definition of effective resistance, we have:
\[
R_{ij} = (e_i - e_j)^\top L_0^+ (e_i - e_j).
\]
Substituting the SVD-based expression for $L_0^+ = U \Sigma^{-2} U^\top$ yields:
\[
R_{ij} = (e_i - e_j)^\top U \Sigma^{-2} U^\top (e_i - e_j).
\]
Define $w = U^\top (e_i - e_j)$, so that
\[
R_{ij} = w^\top \Sigma^{-2} w = \sum_{k=1}^r \frac{w_k^2}{\sigma_k^2},
\]
where $w_k = u_k^\top (e_i - e_j) = u_{k,i} - u_{k,j}$. Thus,
\[
R_{ij} = \sum_{k=1}^r \frac{(u_{k,i} - u_{k,j})^2}{\sigma_k^2}.
\]

Now, consider the sum of effective resistances from node $i$ to all other nodes:
\[
\sum_{j=1}^n R_{ij} = \sum_{j=1}^n \sum_{k=1}^r \frac{(u_{k,i} - u_{k,j})^2}{\sigma_k^2}.
\]
Interchanging the order of summation:
\[
\sum_{j=1}^n R_{ij} = \sum_{k=1}^r \frac{1}{\sigma_k^2} \sum_{j=1}^n (u_{k,i} - u_{k,j})^2.
\]
Expanding the inner sum:
\[
\sum_{j=1}^n (u_{k,i} - u_{k,j})^2 = \sum_{j=1}^n (u_{k,i}^2 - 2 u_{k,i} u_{k,j} + u_{k,j}^2) = n u_{k,i}^2 - 2 u_{k,i} \sum_{j=1}^n u_{k,j} + \sum_{j=1}^n u_{k,j}^2.
\]

Since the columns of $U$ are orthonormal, $\sum_{j=1}^n u_{k,j}^2 = 1$. Additionally, for the graph Laplacian's pseudoinverse, the all-ones vector $\mathbf{1}$ is in the null space of $L_0$, and hence $U^\top \mathbf{1} = 0$, implying $\sum_{j=1}^n u_{k,j} = 0$ for all $k$ corresponding to non-zero singular values. Therefore,
\[
\sum_{j=1}^n (u_{k,i} - u_{k,j})^2 = n u_{k,i}^2 + 1.
\]

Substituting back:
\[
\sum_{j=1}^n R_{ij} = \sum_{k=1}^r \frac{1}{\sigma_k^2} (n u_{k,i}^2 + 1) = n \sum_{k=1}^r \frac{u_{k,i}^2}{\sigma_k^2} + \sum_{k=1}^r \frac{1}{\sigma_k^2}.
\]

Recognizing that $\sum_{k=1}^r \frac{u_{k,i}^2}{\sigma_k^2} = [L_0^+]_{ii} = C_v(i)$ and $\sum_{k=1}^r \frac{1}{\sigma_k^2} = \operatorname{tr}(L_0^+)$, we obtain:
\[
\sum_{j=1}^n R_{ij} = n C_v(i) + \operatorname{tr}(L_0^+),
\]
which completes the proof.
\end{proof}

\section{Experimental Methodology and Implementation Details}
\label{app:methodology}

This appendix provides comprehensive details on our experimental methodology, implementation choices, and detailed pseudocode \Cref{alg:svd_centrality}.

To ensure reproducibility, the complete implementation of our centrality framework and experiments is publicly available at: \texttt{https://github.com/JorgeLuizFranco/svd-centrality}.

All experiments were conducted using Python 3.8+ with key libraries including NetworkX 3.1 for graph construction and traditional centrality computations, NumPy 1.21+ for numerical computations and matrix operations, SciPy 1.7+ for SVD decomposition and pseudoinverse computation, Matplotlib 3.5+ for publication-quality figures, and Seaborn 0.11+ for statistical visualization.

The core SVD centrality computation follows Algorithm~\ref{alg:svd_centrality}, where we construct the oriented incidence matrix, compute the compact SVD, derive the pseudoinverse Laplacians, and extract the diagonal elements as centrality measures. Hub and authority centralities are then computed by aggregating edge centralities to vertices according to the incoming and outgoing edge sets.

\begin{algorithm}
\caption{SVD Incidence Centrality Computation}
\label{alg:svd_centrality}
\begin{algorithmic}[1]
\REQUIRE Directed graph $G = (V, E)$ with $n$ vertices and $m$ edges
\ENSURE Vertex centralities $C_v$, edge centralities $C_e$, hub scores $H$, authority scores $A$
\STATE Construct oriented incidence matrix $B \in \mathbb{R}^{n \times m}$
\STATE Compute compact SVD: $B = U \Sigma V^T$ where $U \in \mathbb{R}^{n \times r}$, $V \in \mathbb{R}^{m \times r}$
\STATE Apply Tikhonov regularization: $\sigma_{k,\text{reg}}^{-2} = \frac{1}{\sigma_k^2 + \tau}$ for $k = 1, \ldots, r$
\STATE Compute vertex centrality: $C_v(i) = \sum_{k=1}^r \frac{u_{k,i}^2}{\sigma_k^2 + \tau}$ for $i = 1, \ldots, n$
\STATE Compute edge centrality: $C_e(j) = \sum_{k=1}^r \frac{v_{k,j}^2}{\sigma_k^2 + \tau}$ for $j = 1, \ldots, m$
\FOR{each vertex $i = 1, \ldots, n$}
    \STATE $H(i) = \sum_{j \in E_{\text{out}}(i)} C_e(j)$ \COMMENT{Hub score: outgoing edge centralities}
    \STATE $A(i) = \sum_{j \in E_{\text{in}}(i)} C_e(j)$ \COMMENT{Authority score: incoming edge centralities}
\ENDFOR
\STATE Apply convex combination: $H_{\alpha}(i) = \alpha C_v(i) + (1-\alpha) H(i)$
\STATE Apply convex combination: $A_{\alpha}(i) = \alpha C_v(i) + (1-\alpha) A(i)$
\RETURN $C_v, C_e, H_{\alpha}, A_{\alpha}$
\end{algorithmic}
\end{algorithm}




\end{document}